\def\lhom#1{\textsc{List $#1$-Colour\-ing}}
\title{List homomorphisms to separable signed graphs}
\author{Jan Bok\inst{1,6}\orcidID{0000-0002-7973-1361} \and Richard Brewster\inst{2}\orcidID{0000-0001-7237-4288} \and Tom\' as Feder\inst{3} \and Pavol Hell\inst{4}\orcidID{0000-0001-7609-9746} \and Nikola Jedličková\inst{5}\orcidID{0000-0001-9518-6386}}
\institute{
Computer Science Institute, Faculty of Mathematics and Physics, Charles University, Prague, Czech Republic, \url{bok@iuuk.mff.cuni.cz}
\and
Department of Mathematics and Statistics, Thompson Rivers University, Canada, \url{rbrewster@tru.ca}
\and
268 Waverley St., Palo Alto, USA, \url{tomas@theory.stanford.edu}
\and
School of Computing Science, Simon Fraser University, Canada, \url{pavol@sfu.ca}
\and
Department of Applied Mathematics, Faculty of Mathematics and Physics, Charles University, Prague, Czech Republic, \url{jedlickova@kam.mff.cuni.cz}
\and
Universit\'e Clermont Auvergne, CNRS, Clermont Auvergne INP, Mines Saint-\'Etienne, LIMOS, 63000 Clermont-Ferrand, France
}
\begin{document}

\maketitle
\begin{abstract}
The complexity of the list homomorphism problem for signed graphs appears difficult to classify. Existing results focus on special classes of signed graphs, such as trees \cite{mfcs} and reflexive signed graphs \cite{ks}. Irreflexive signed graphs are in a certain sense the heart of the problem, as noted by a recent paper of Kim and Siggers. We focus on a special class of irreflexive signed graphs, namely those in which the unicoloured edges form a spanning path or cycle, which we call separable signed graphs. We classify the complexity of list homomorphisms to these separable signed graphs; we believe that these signed graphs will play an important role for the general resolution of the irreflexive case. We also relate our results to a conjecture of Kim and Siggers concerning the special case of semi-balanced irreflexive signed graphs; we have proved the conjecture in another paper, and the present results add structural information to that topic.
\end{abstract}

\section{Motivation and background}

We investigate the complexity of (list) homomorphism problems for signed graphs. The complexity of homomorphism (and list homomorphism) problems is a popular topic. For undirected graphs, it was shown in~\cite{hn} that the problem of deciding the existence of a homomorphism of an input graph to a fixed graph $H$ (also known as the \textsc{$H$-Colouring} problem, or just \textsc{$H$-Colouring}) is polynomial if $H$ is bipartite or has a loop, and is NP-complete otherwise. For general structures $H$, the corresponding problem lead to the so-called Dichotomy Conjecture~\cite{fv,jeav}, which was only recently established~\cite{bula,zhuk}. In the list homomorphism problem for $H$ (also known as the \lhom{H} problem, or just \lhom{H}), the input is a graph together with lists of allowed images for each vertex. (The precise definitions are given below.) The list homomorphism problems have generally a nicer behaviour than the homomorphism problems, because the lists facilitate recursion to subproblems. For undirected graphs, \lhom{H} is polynomial if $H$ is a bi-arc graph (see below), and is NP-complete otherwise~\cite{feder1998list,feder1999list}. Even for general structures $H$, where the list version is equivalent to a special case of the basic version, the classification for the list version was achieved a decade earlier~\cite{bulatov} than the basic version.

Signed graphs are related to graphs with two symmetric binary relations; they are additionally equipped with an operation of {\em switching} (explained below). The possibility of switching poses a challenge when classifying the complexity of homomorphisms, as the problem no longer appears to be a homomorphism problem for relational structures. Nevertheless, it can be shown~\cite{mfcs} that it is equivalent to such a problem and hence the results from~\cite{bula,zhuk} imply that there these problems also enjoy a dichotomy of polynomial versus NP-complete. For homomorphisms of signed graphs without lists, a concrete dichotomy classification was conjectured in~\cite{BFHN}, and proved in~\cite{dichotomy}. Interestingly, for signed graphs, the list version no longer seems easier to classify, and the progress towards a classification or even a conjecture has been slow~\cite{dm,BFHN,ks}.

\paragraph{Signed graphs.} A {\em signed graph} $\widehat{G}$ consists of a set $V(G)$ and two symmetric binary relations $+, -$. We also view $\widehat{G}$ as a graph $G$ (the {\em underlying graph of $\widehat{G}$}) with the vertex set $V(G)$, the edge set $+ \cup -$, and a mapping $\sigma: E(G) \to \{+, -\}$, assigning a sign ($+$ or $-$) to each edge of $G$. A loop is considered to be an edge. Two signed graphs are considered {\em (switching) equivalent} if one can be obtained from the other by a sequence of {\em switchings}; switching at a vertex $v$ results in changing the signs of all non-loop edges incident to $v$. The signs of loops are unchanged by switching.

We will usually view signs of edges as colours, and view positive edges as  {\em blue} (solid lines in figures), and negative edges as {\em red} (dashed lines in figures). It will be convenient to call a red-blue pair of edges with the same endpoint(s) a {\em bicoloured edge}; however, it is important to keep in mind that formally they are two distinct edges. By contrast, we call edges that are not part of such a pair 
{\em unicoloured}. If a vertex $u$ is adjacent by a bicoloured edge or a unicoloured edge to $v$, we say that $v$ is a \emph{bicoloured neighbour} or an \emph{unicoloured neighbour} of $u$, respectively.

We call $\widehat{H}$ a {\em signed tree} if the underlying graph $H$, with any existing loops removed and multi-edges replaced by simple edges, is a tree.

The study of signed graphs seems to have originated in~\cite{harary,hararykabell}, and was most notably advanced in the papers of Zaslavsky~\cite{zav81,zav82b,zav82a,Z97,zavsurvey}. Guenin~\cite{guenin} pioneered the investigation of homomorphisms of signed graphs; see also, e.g.,~\cite{brewgrav,nasrolsop,rezazasla}. 

\paragraph{Homomorphisms of signed graphs.} A {\em sign-preserving homomorphism} of a signed graph $\widehat{G}$ to a signed graph $\widehat{H}$ is a function $f: V(G) \to V(H)$ such that if $xy$ is a blue (respecitvely red) edge of $\widehat{G}$, then $f(x)f(y)$ is either a blue (respectively red) or a bicoloured edge of $\widehat{H}$.  This definition implies bicoloured edges of $\widehat{G}$ map to bicoloured edges of $\widehat{H}$ and for each unicoloured edge of $\widehat{G}$ with a bicoloured image in $\widehat{H}$ implicitly the image is the appropriate edge of the same sign within the bicoloured edge.
A {\em homomorphism} of the signed graph $\widehat{G}$ to the signed graph $\widehat{H}$ is a mapping $f : V(G) \to V(H)$ for which there exists a signed graph $\widehat{G}'$ equivalent to $\widehat{G}$ such that $f$ is a sign-preserving homomorphism of $\widehat{G}'$ to $\widehat{H}$. A {\em list homomorphism} of $\widehat{G}$ to $\widehat{H}$, with respect to the lists $L(v) \subseteq V(H), v \in V(G)$, is a homomorphism $f$ of $\widehat{G}$ to $\widehat{H}$ such that $f(v) \in L(v)$ for all $v \in V(G)$.  We remark an equivalent definition of homomorphisms for signed graphs without switching but rather based on the balance of cycles (in $\widehat{G}$ and their image in $\widehat{H}$) can be found in~\cite{rezazasla}. This alternative definition works with the notions of unicoloured and bicoloured edges. See~\cite{arxiv-latin} for further details.  For this paper we will use exclusively the definition based on switching and sign-preserving homomorphisms.

Let $\widehat{H}$ be a fixed signed graph.
The {\em homomorphism problem} for $\widehat{H}$ (the \textsc{$\widehat{H}$-Colouring} problem, or just \textsc{$H$-Colouring}) takes as input a signed graph $\widehat{G}$ and asks whether there exists a homomorphism of $\widehat{G}$ to $\widehat{H}$. The {\em list homomorphism problem} for $\widehat{H}$ (the \lhom{\widehat{H}} problem, or just \lhom{\widehat{H}}) takes as an input a signed graph $\widehat{G}$ with lists $L(v) \subseteq V(H)$, for every $v \in V(G)$, and asks whether there exists a homomorphism $f$ of the signed graph $\widehat{G}$ to $\widehat{H}$ such that $f(v) \in L(v)$ for every $v \in V(G)$. 

A subgraph $\widehat{G}$ of the signed graph $\widehat{H}$ is the \emph{signed core}, or simply an {\em s-core}, of $\widehat{H}$ if there is signed graph homomorphism  $f$ of $\widehat{H}$ to $\widehat{G}$, and every homomorphism of the signed graph $\widehat{G}$ to itself is a bijection on $V(G)$. It is easy to see that the signed core of any signed graph is unique up to switching isomorphism\footnote{$\widehat G$ and $\widehat H$ are switching isomorphic if there exist homomorphisms $\phi: \widehat G \to \widehat H$ and $\psi: \widehat H \to \widehat G$ such that $\phi \circ \psi$ and $\psi \circ \phi$ are identity mappings on $\widehat G$ and $\widehat H$, respectively.}
The dichotomy classification for \textsc{$H$-Colouring} conjectured in~\cite{BFHN} and proved in~\cite{dichotomy} is as follows. 

\begin{theorem}{\cite{dichotomy}}\label{thm:dichotomy}
\textsc{$H$-Colouring} is po\-ly\-no\-mial-time solvable if the signed core of $\widehat{H}$ has at most two edges, and is NP-complete otherwise.
\end{theorem}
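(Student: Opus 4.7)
The plan has two parts: a tractability side when the s-core has at most two edges, and NP-hardness otherwise. A preliminary step replaces $\widehat{H}$ by its signed core $\widehat{H}'$; since $\widehat{H}$ retracts onto $\widehat{H}'$, the problems $\widehat{H}$-Colouring and $\widehat{H}'$-Colouring are polynomially equivalent, so I may assume $\widehat{H}$ is itself an s-core.

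For the tractable direction I would enumerate, up to switching isomorphism, the possible s-cores on at most two edges: a single vertex carrying a positive or negative loop; a single positive, negative, or bicoloured edge (possibly with loops at endpoints); two edges meeting at a common vertex in each sign pattern; and two disjoint edges. In every such case there are only a handful of potential images, and I would give a direct polynomial algorithm, typically by reducing to $2$-SAT. After fixing a switch at one base vertex in each connected component of the input $\widehat{G}$, the requirement that each edge of $\widehat{G}$ map to one of the few available edges of $\widehat{H}'$ becomes a conjunction of binary constraints on the switching signs at the endpoints, which $2$-SAT resolves in polynomial time.

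For the hard direction, the strategy is to show that every s-core $\widehat{H}'$ with at least three edges encodes some already NP-hard problem, such as $H$-Colouring for a non-bipartite simple graph $H$. Invoking the equivalence between signed graph homomorphism and a CSP on a relational structure with two symmetric binary relations (alluded to just before Theorem~\ref{thm:dichotomy}), one may also appeal to the general CSP dichotomy to deduce NP-hardness from the absence of a suitable polymorphism. The \emph{main obstacle} is the case analysis over s-cores with exactly three edges: such a core may be a monochromatic path on four vertices, a mixed-sign path on three vertices, a ``cherry'' of one bicoloured and one unicoloured edge, a signed triangle, or a small loop configuration, and each family requires its own gadget certifying that the resulting CSP admits no Maltsev, majority, or near-unanimity polymorphism. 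Converting this algebraic criterion into the crisp combinatorial ``more than two edges'' threshold is precisely where the structural work lies.
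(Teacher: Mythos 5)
You should first be aware that the paper does not prove Theorem~\ref{thm:dichotomy} at all: it is imported verbatim from \cite{dichotomy} (where the conjecture of \cite{BFHN} is established), so there is no in-paper argument to compare against; what you are proposing to reprove is the main theorem of a separate paper. Judged on its own, your text is a plan rather than a proof, and the hardness half has a genuine gap. You reduce the work to a case analysis of s-cores with \emph{exactly} three edges, but that reduction is unavailable: without lists, \textsc{$H$-Colouring} for a subgraph of $\widehat{H}$ does not reduce to \textsc{$H$-Colouring} for $\widehat{H}$ (there is nothing to confine images to the subgraph), and an s-core with many edges need not contain a three-edge s-core anyway, so the hardness argument must handle \emph{every} s-core with at least three edges directly --- which is precisely the content of \cite{dichotomy} that your plan defers. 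The appeal to the general CSP dichotomy does not close this: even granting the (nontrivial, cited from \cite{mfcs}) equivalence of switching homomorphisms with a relational CSP, Bulatov--Zhuk only converts hardness into the statement ``no weak near-unanimity (equivalently, no Taylor/Siggers) polymorphism''; showing that this algebraic condition fails for all cores with at least three edges is the whole difficulty, and your criterion ``no Maltsev, majority, or near-unanimity polymorphism'' is in any case the wrong one --- absence of those particular polymorphisms does not imply NP-hardness.

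The tractable direction is also stated too loosely. Fixing a switch at ``one base vertex per component'' is not meaningful, since switching is an independent choice at every vertex; the natural formulation has, at each input vertex, a choice of image \emph{and} a switching bit, which is a binary CSP over a domain of constant size larger than two, and such CSPs are not polynomial in general (graph colouring is of this form). The claim only becomes 2-SAT once you argue that for each of the finitely many cores with at most two edges the image of each vertex is forced (by bipartition class, balance, or the structure of the core), leaving only the Boolean switching variables; this is exactly what the standard arguments do, by checking bipartiteness and balance or anti-balance of the input, and your plan should say so explicitly rather than assert the 2-SAT encoding exists.
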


In counting edges we of course include loops, and count each unicoloured edge as one and each bicoloured edge as two.

\paragraph{Balanced and semi-balanced signed graphs.} A signed graph is {\em balanced} if it is equivalent to one in which all edges are only blue, and is {\em anti-balanced} if it is equivalent to one in which all edges are only red. Note that it follows that a balanced (anti-balanced) signed graph has no bicoloured edges.

A signed graph is {\em  semi-balanced} ({\em semi-anti-balanced}) if it is equivalent to one in which all edges are bicoloured or blue (respectively red).\footnote{We note that this class has been called {\em pr-graphs} in~\cite{ks}, {\em uni-balanced graphs} in~\cite{mfcs}, and {\em weakly balanced graphs} in~\cite{dm,latin}.}

For list homomorphisms of signed graphs, there are several special cases where the complexity has been classified. These include signed graphs without bicoloured edges~\cite{bordeaux}, signed trees with possible loops~\cite{mfcs}, and semi-balanced reflexive and irreflexive signed graphs~\cite{arxiv-latin,latin,ks}. We first introduce the relevant structures used to prove NP-completeness results and used to construct polynomial time algorithms.

\paragraph{Chains.} Let $U, D$ be two walks in $\widehat{H}$ of equal length. Suppose $U$ has vertices $u = u_0,u_1, \ldots,u_k=v$, and $D$ has vertices $u = d_0,d_1, \ldots,d_k=v$. As $U$ and $D$ are walks, vertices may repeat both within $U$ and $D$ and be common to both $U$ and $D$.
We say that $(U,D)$ is a \emph{chain}, provided $uu_1, d_{k-1}v$ are unicoloured edges and $ud_1, u_{k-1}v$ are bicoloured edges, and for each~$i$, $1 \leq i \leq k-2$, we have
\begin{enumerate}
\item both $u_iu_{i+1}$ and $d_id_{i+1}$ are edges of $\widehat{H}$ 
while $d_iu_{i+1}$ is not an edge of $\widehat{H}$, or
\item both $u_iu_{i+1}$ and $d_id_{i+1}$ are bicoloured edges of $\widehat{H}$ 
while $d_iu_{i+1}$ is not a bicoloured edge of~$\widehat{H}$.
\end{enumerate}

\Cref{fig:chain} show a simple example of a chain and \Cref{fig:forbgraphsirref} shows some important irreflexive signed trees with a chain.

The existence of a chain in a signed graph implies hardness.

\begin{figure}[tb]
\centering
\includegraphics[scale=1]{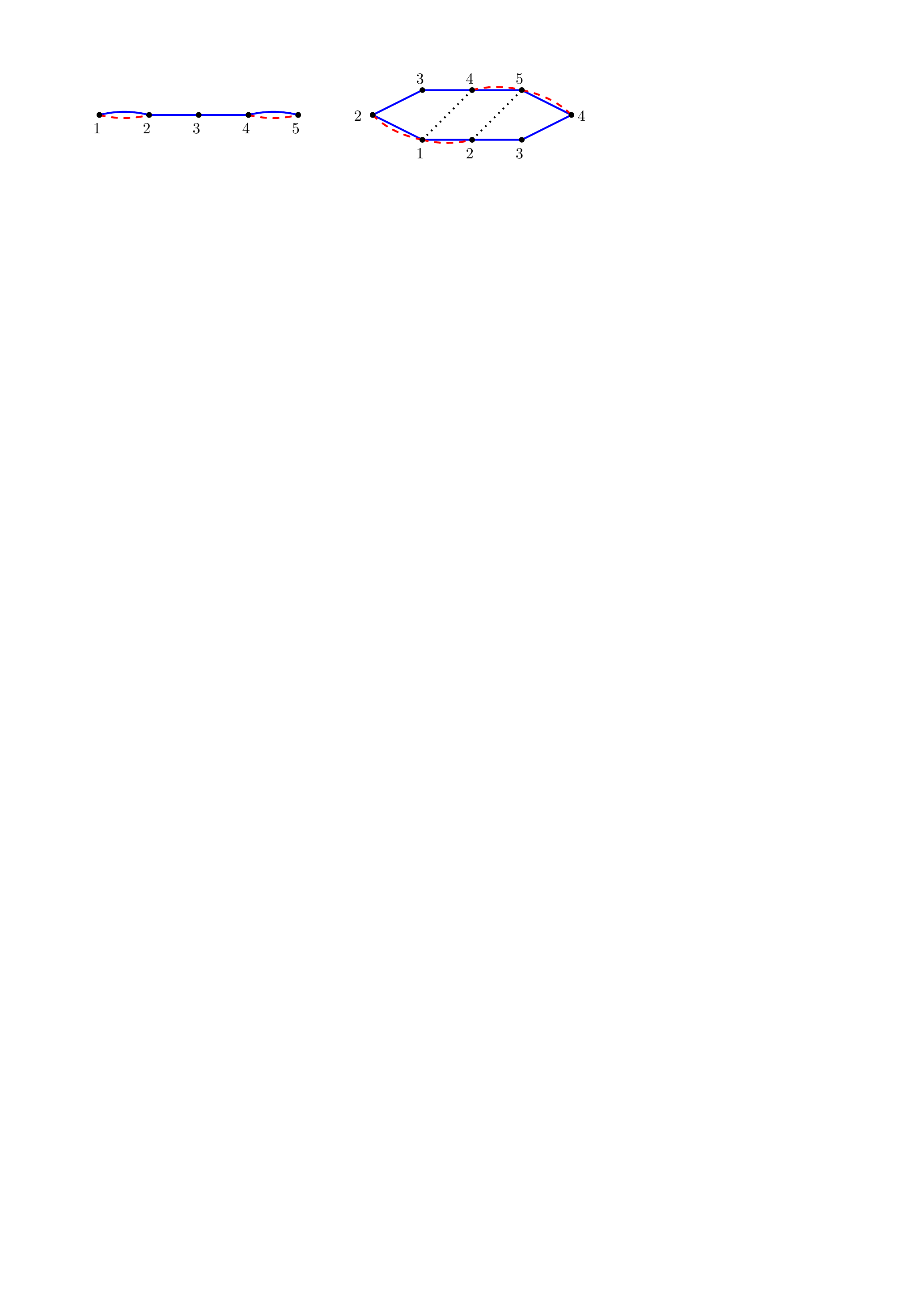}
\caption{A signed graph (on the left) together with a chain (on the right). The upper walk is $U$, the lower walk is $D$; the dotted blue edges must be absent.}
\label{fig:chain}
\end{figure}

\begin{figure}[tb]
\centering
\includegraphics[width=\textwidth]{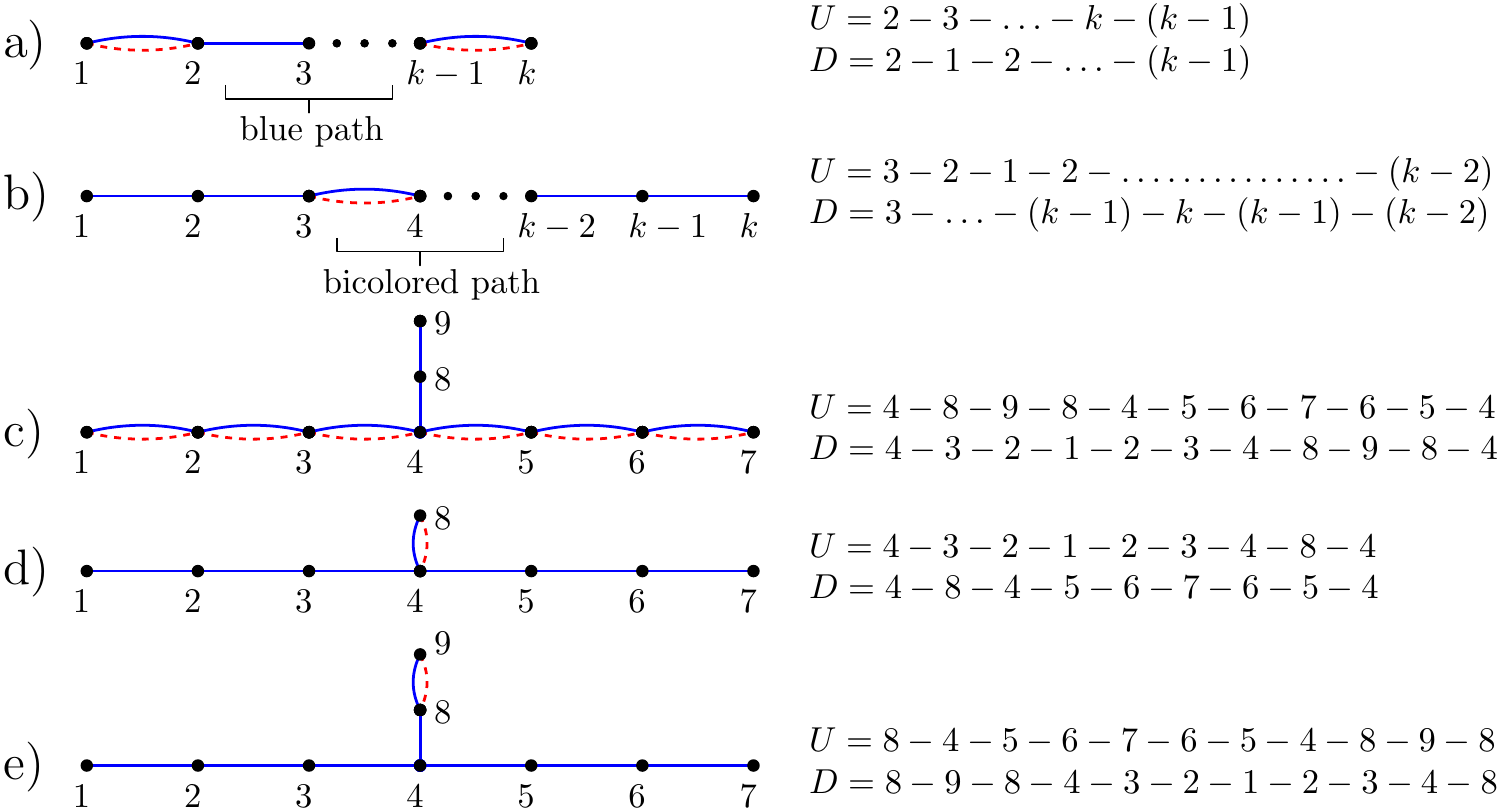}
\caption{The family $\cal F$ of signed graphs yielding NP-complete problems, and a chain in each. (The figure appeared first in~\cite{mfcs}.)}
\label{fig:forbgraphsirref}
\end{figure}

\begin{theorem}{\cite{mfcs}}\label{thm:chain}
If a signed graph $\widehat{H}$ contains a chain, then \lhom{\widehat H} is NP-complete.
\end{theorem}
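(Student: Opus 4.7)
The plan is to show NP-completeness by reducing from a known NP-hard problem — a natural choice is \textsc{Not-All-Equal 3-SAT}, interpreting the endpoints $u$ and $v$ of the chain as the two truth values. An alternative is to reduce from some \lhom{\widehat{K}} that has been previously shown to be NP-hard and that is realizable using the chain structure. The chain $(U,D)$ will serve as a propagation gadget that transports a binary choice between $u$ and $v$ from one vertex of the input signed graph $\widehat{G}$ to another.

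The basic gadget $\Gamma$ is a path $a = x_0, x_1, \ldots, x_k = b$ in $\widehat{G}$, equipped with interior lists $L(x_i) = \{u_i, d_i\}$ and boundary lists $L(a) = L(b) = \{u, v\}$. Because conditions~(1) and~(2) of the chain guarantee that $U$ and $D$ use matching edge types at every interior step, we may choose the sign of each interior edge $x_i x_{i+1}$ so that following either walk is consistent with the images available in $L(x_i), L(x_{i+1})$. The first and last edges of $\Gamma$ are chosen so that, depending on whether $f(a) = u$ or $f(a) = v$, the image sequence must begin along $U$ or along $D$ respectively; the endpoint asymmetry (unicoloured $uu_1$ and bicoloured $ud_1$ at one end, bicoloured $u_{k-1}v$ and unicoloured $d_{k-1}v$ at the other) combined with the interior no-crossing conditions then forces the entire sequence to stay on one walk. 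The net effect is an "inverter" enforcing $f(a) = u \Leftrightarrow f(b) = v$. Gluing three copies of this inverter at a common vertex encodes a NAE-3SAT clause in the standard manner.

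The main obstacle is the interplay with switching: since a homomorphism may first switch the source $\widehat{G}$, one must verify that no sequence of switchings at interior gadget vertices can "short-circuit" the propagation or enable a forbidden image edge $d_i u_{i+1}$. The two cases of the chain definition require separate treatment — interior bicoloured edges (case~2) are preserved by switching, so no difficulty arises there, while interior unicoloured edges (case~1) flip under switching of an endpoint, and one must argue that after any switching of $\widehat{G}$ the sign pattern around $x_i$ remains incompatible with the forbidden image. A careful inductive argument along the gadget, using both the endpoint asymmetry to pin down $f(x_1)$ and the no-crossing conditions to propagate, handles this; once the inverter is verified the reduction from NAE-3SAT is routine, and membership in NP is clear.
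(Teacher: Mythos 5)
Your general strategy---a path gadget whose interior vertices carry the lists $\{u_i,d_i\}$, whose edges are chosen unicoloured or bicoloured according to cases (1) and (2) so that the forbidden crossing $d_iu_{i+1}$ blocks switching between walks, and whose ends exploit the unicoloured/bicoloured asymmetry at $u$ and $v$---is exactly the mechanism a chain is meant to provide (note the present paper does not reprove \Cref{thm:chain}; it cites \cite{mfcs}, and uses chains in this spirit). But the proposal has concrete gaps. First, the boundary lists $L(a)=L(b)=\{u,v\}$ are not legitimate: the definition of a chain says nothing about adjacency between $v$ and $u_1,d_1$ (nor between $u$ and $u_{k-1},d_{k-1}$), and in the motivating examples (the trees of family $\mathcal{F}$) such edges are absent, so when $f(a)=v$ your gadget simply has no admissible image for $x_1$ rather than forcing $f(b)=u$. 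Second, the no-crossing conditions forbid only $d_iu_{i+1}$, not $u_id_{i+1}$; propagation is one-way (once the image is on $D$ it must stay on $D$), so a sequence starting on $U$ may drop to $D$, and your claim that the image must ``stay on one walk''---on which the inverter behaviour $f(a)=u\Leftrightarrow f(b)=v$ rests---is false. What the chain actually yields (pin the two ends to $u$ and $v$ and make both terminal gadget edges bicoloured) is a \emph{forbidden-combination} constraint between two attachment points with lists $\{u_1,d_1\}$ and $\{u_{k-1},d_{k-1}\}$: the bicoloured edge at the $u$-end forces $d_1$ (since $uu_1$ is unicoloured), monotone propagation then forces $d_{k-1}$, while the bicoloured edge at the $v$-end demands $u_{k-1}$---a NAND-type relation, not an equivalence.

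Third, even if an inverter were available, gluing three inverters at a common vertex does not encode an NAE-3SAT clause: over a two-element domain, three disequalities to a common vertex force the three variable vertices to be \emph{equal}, and pairwise disequalities among the three are unsatisfiable; NAE is not expressible by binary disequality constraints, so the clause gadget collapses. A correct proof must assemble variable and clause gadgets from the asymmetric one-way implication and the forbidden-combination constraint above (together with list constraints), which is where the substance of the cited argument lies. Your remarks on switching are essentially right (bicoloured edges of $\widehat{G}$ must map to bicoloured edges, and signs of unicoloured edges in a tree-shaped gadget can be normalized by switching, so only the unicoloured-versus-bicoloured choice matters), but they do not repair the inverter claim or the clause encoding.
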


\paragraph{Invertible pairs.} An {\em invertible pair} in an undirected graph $H$ is a pair of vertices $a, b$, with two walks $U, D$ of the same length, where $U$ has vertices $a = u_0,u_1,\ldots,u_k=b, u_{k+1},\ldots,u_t = a$, and $D$ has vertices $b=d_0,d_1, \ldots,d_k=a,d_{k+1},\ldots,d_t=b$, such that for each~$i$, $1 \leq i \leq t-2$, both $u_iu_{i+1}$ and $d_id_{i+1}$ are edges of $H$, while $d_iu_{i+1}$ is not an edge of $\widehat{H}$. For simplicity we say that a signed graph has an invertible pair if its underlying graph has an invertible pair.

\Cref{fig:tripleclaw} shows the graph $F_1$, with an invertible pair $1, 10$. The walks $U, D$ begin as indicated. Then $U$ alternates on $7-6$ while $D$ moves from $10$ to $1$.  Next while $D$ alternates on $1-2$, $U$ moves from $7$ to $10$.  Continuing similarly for the second half, $U$ moves from $10$ to $1$ and $D$ moves from $1$ to $10$.

\begin{figure}
\centering
\includegraphics[width=0.75\textwidth]{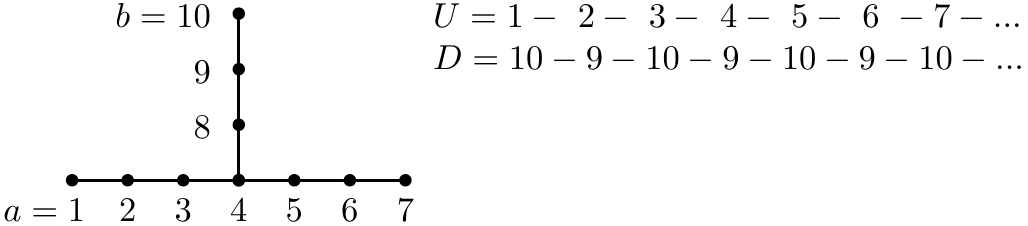}
\caption{The graph $F_1$, with an invertible pair.}
\label{fig:tripleclaw}
\end{figure}

The following result follows from ~\cite{mfcs,feder1998list,esa,feder1999list}.

\begin{theorem}\label{thm:invo}
If $\widehat{H}$ has an invertible pair, then \lhom{\widehat{H}} is NP-complete.
\end{theorem}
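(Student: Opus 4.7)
The plan is to combine the classical unsigned result from \cite{feder1998list,feder1999list}, which says that an invertible pair in $H$ already makes the unsigned \textsc{List $H$-Colouring} NP-complete, with a switching-aware encoding that lifts the reduction into the signed setting.

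First, I would recall the standard inverter gadget used in the unsigned reduction. Given the walks $U = u_0, u_1, \ldots, u_t$ and $D = d_0, d_1, \ldots, d_t$, one associates to each pair of variables $x, y$ (taking values in $\{a,b\}$ in a reduction from \textsc{NAE-3SAT} or a similar problem) a path $v_0, v_1, \ldots, v_t$ with $v_0 = x$, $v_t = y$, and $L(v_i) = \{u_i, d_i\}$. The hypothesis $d_i u_{i+1} \notin E(H)$ prevents any homomorphism from switching from the $D$-track to the $U$-track between consecutive positions, and by exploiting the symmetry built into the walks one then forces $f(x) \ne f(y)$, i.e.\ $\{f(x), f(y)\} = \{a,b\}$. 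Composing such inverter gadgets with standard logical gadgets yields NP-hardness of unsigned \textsc{List $H$-Colouring}.

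Second, I would adapt this reduction to \lhom{\widehat H}. For each inverter gadget path $v_0 v_1 \cdots v_t$ built in $\widehat{G}$, I would sign its edges so that $v_i v_{i+1}$ carries the sign of $u_i u_{i+1}$ in $\widehat H$ (with a canonical choice of sign when $u_i u_{i+1}$ is bicoloured). Since a path is tree-like, every signature on it is realizable by switching, so no conflict arises from chaining gadgets. A valid list homomorphism $\widehat{G} \to \widehat H$ then, after the appropriate switching of $\widehat{G}$, sends the gadget path sign-preservingly onto either $U$ or $D$, and the unsigned analysis carries over.

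The main obstacle is ensuring that the extra switching freedom in the signed setting does not collapse the gadget. This is handled by the observation that switching alters only edge signs, never the underlying graph, and that the defining condition of an invertible pair ($d_i u_{i+1} \notin E(H)$) is a statement about the underlying graph. Consequently the forbidden transitions remain forbidden after any switching of either $\widehat{G}$ or $\widehat H$, so the inverter gadgets behave exactly as in the unsigned case and the reduction from \cite{feder1998list,feder1999list} goes through, possibly with the refinements in \cite{mfcs,esa} to cover bicoloured edges along the walks.
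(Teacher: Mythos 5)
Your unsigned half is fine, but it is just the cited Feder--Hell--Huang result; note that the paper itself offers no proof of this theorem, stating only that it follows from \cite{mfcs,feder1998list,esa,feder1999list}, so the substance of your attempt is entirely in the lifting step, and that is where there is a genuine gap. Your argument that switching changes only signs, never the underlying graph, shows one direction only: every signed list homomorphism of $\widehat{G}$ is in particular an unsigned list homomorphism of $G$, so the forbidden transitions $d_iu_{i+1}\notin E(H)$ still constrain the gadgets (soundness). What you never establish is the converse (completeness): that a satisfying assignment, i.e.\ the intended unsigned homomorphism produced by the classical reduction, can actually be realized as a homomorphism of the \emph{signed} instance --- that is, that there exists a single, globally consistent switching of $\widehat{G}$ under which every gadget edge gets the sign of its (possibly unicoloured) image edge.

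The sentence ``since a path is tree-like, every signature on it is realizable by switching, so no conflict arises from chaining gadgets'' is exactly where this fails. The gadget paths are glued at shared variable vertices, so $\widehat{G}$ is far from a tree, and the balance of every cycle is a switching invariant. Concretely, once the switching values at the two endpoints of a gadget path are fixed (they are shared with other gadgets), switching its interior vertices preserves the product of the edge signs along the path; hence a path whose edges you signed ``to match $U$'' can map onto $D$ only if the balances of $U$ and $D$ as walks in $\widehat{H}$ (forced by its unicoloured edges) happen to agree, or if an odd switch is made at an endpoint, which then propagates a parity constraint into every other gadget incident with that variable. Since which track a gadget uses depends on the unknown satisfying assignment, your canonical signing gives no control over these parities, and the reduction may map satisfiable instances to signed instances with no homomorphism. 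This global balance bookkeeping is precisely the signed-graph difficulty that the cited machinery is built to handle (the switching/double-cover reformulation behind \cite{mfcs,esa}, or chains in the semi-balanced setting); it is not a matter of ``refinements to cover bicoloured edges along the walks'', and it does not follow from the observation that non-edges are preserved by switching.
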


We now introduce the concepts needed to state the polynomial time results.

\paragraph{Bi-arc graphs.} Let $C$ be a fixed circle with two specified points $n$ and $s$. A {\em bi-arc graph} is a graph $H$ such that each vertex $v \in V(H)$ can be associated with a pair of intervals $N_v, S_v$ where $N_v$ contains $n$ but not $s$ and $S_v$ contains $s$ but not $n$ satisfying the following conditions: (i) $N_v$ intersects $S_w$ if and only if $S_v$ intersects $N_w$, and (ii) $N_v$ intersects $S_w$ if and only if $vw$ is not an edge of $H$. This class of graphs includes all interval graphs: a reflexive graph is a bi-arc graph if and only if it is an interval graph. Moreover, an irreflexive graph is a bi-arc graph if and only if it is bipartite and its complement is a circular arc graph~\cite{feder1999list}.

\paragraph{Min ordering.} To distinguish the two parts of a bipartite graph we speak of \emph{black} and \emph{white} vertices. A {\em min ordering} of a bipartite graph $H$ is a pair $<_b, <_w$, where $<_b$ is a linear ordering of the black vertices and $<_w$ is a linear ordering of the white vertices, such that for white vertices $x <_w x'$ and black vertices $y <_b y'$, if $xy', x'y$ are both edges in $H$, then $xy$ is also an edge in $H$. It is known~\cite{fv} that if a bipartite graph $H$ has a min ordering, then \lhom{H} can be solved in polynomial time. A {\em special min ordering} of a signed irreflexive graph $\widehat{H}$ is a min ordering of the underlying undirected graph $H$ for which all bicoloured neighbours of each vertex appear before all of its unicoloured neighbours.

\paragraph{Existing classifications.} We can now state our classifications mentioned above. For the case of signed graphs without bicoloured edges, the result is this.

\begin{theorem} {\cite{bordeaux}}\label{thm:bordeaux}
Suppose $\widehat{H}$ is a connected signed graph without bicoloured edges. If the underlying graph $H$ is a bi-arc graph, and $\widehat{H}$ is balanced or anti-balanced, then \lhom{\widehat{H}} is polynomial-time solvable. Otherwise, the problem is NP-complete.
\end{theorem}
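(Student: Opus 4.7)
The argument splits into a polynomial-time algorithm when $H$ is a bi-arc graph and $\widehat H$ is balanced (or anti-balanced), and NP-completeness in the remaining cases.

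\emph{Polynomial side.} By symmetry assume $\widehat H$ is balanced; switch it to an equivalent $\widehat H'$ with only blue edges. A homomorphism $\widehat G \to \widehat H$ is, by definition, a sign-preserving homomorphism of some switching $\widehat G'$ of $\widehat G$ to $\widehat H'$; since $\widehat H'$ has no red edges, $\widehat G'$ must also be entirely blue, forcing every connected component of $\widehat G$ carrying at least one edge to be balanced. The algorithm is therefore: check balance of each component of $\widehat G$ in polynomial time via a standard BFS parity test; reject if an edge-containing component is unbalanced; switch every balanced component to all-blue; then run the polynomial-time algorithm for \lhom{H} on the underlying bi-arc graph $H$ provided by Feder--Hell--Huang~\cite{feder1999list}.

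\emph{NP-complete side.} Two scenarios arise. First, if $H$ is not a bi-arc graph, then \lhom{H} is NP-complete by Feder--Hell--Huang; when $\widehat H$ is balanced (respectively anti-balanced), switch it to all-blue (all-red) and reduce from \lhom{H} by presenting any input graph as an all-blue (all-red) signed graph with the same lists. Such an input is automatically balanced, and the signed list-homomorphism problem coincides with the unsigned one. Second, if $\widehat H$ is neither balanced nor anti-balanced, then $\widehat H$ contains both a cycle with an odd number of red edges and a cycle with an odd number of blue edges.

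\emph{Main obstacle.} Proving NP-completeness in the second scenario is the delicate step: the underlying graph $H$ need not admit an invertible pair (for instance, $H=C_4$ with a single red edge is bi-arc, yet its underlying $C_4$ has no invertible pair), and $\widehat H$ has no bicoloured edges, so neither Theorem~\ref{thm:invo} nor Theorem~\ref{thm:chain} applies directly. The plan is to construct a gadget-based reduction, for example from \textsc{Not-All-Equal}-$3$-\textsc{Sat}, exploiting the parity mismatch between the two unbalanced cycles: a variable gadget embeds into a walk whose two traversals differ in overall sign, and clauses are joined so that a globally consistent switching of the input exists if and only if the formula is satisfiable. Verifying that such a reduction can be carried out uniformly across every signed graph $\widehat H$ falling in this case, irrespective of the precise placement of the unbalanced cycles, is the main technical hurdle.
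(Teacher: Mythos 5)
This statement is quoted from \cite{bordeaux}; the present paper gives no proof of it, so your attempt can only be judged on its own. The polynomial half of your argument is sound: switching $\widehat H$ to all-blue (or all-red), observing that the input must then be componentwise balanced (respectively anti-balanced), and handing the rest to the unsigned \lhom{H} algorithm for bi-arc graphs \cite{feder1999list} is exactly the right reduction (one small point: the balance test must also reject inputs containing a bicoloured edge, i.e.\ a red--blue digon, which can never map into a target without bicoloured edges; a plain BFS parity test on a simple graph does not see this unless you treat such a digon as an unbalanced cycle of length two). Likewise, your first hardness scenario --- $H$ not bi-arc, $\widehat H$ balanced or anti-balanced, reduce from unsigned \lhom{H} by presenting inputs as all-blue (all-red) signed graphs --- is correct.

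The genuine gap is your second hardness scenario, which is precisely where the content of the theorem lies: when $\widehat H$ has no bicoloured edges and is neither balanced nor anti-balanced, you offer only a plan ("a gadget-based reduction, for example from \textsc{Not-All-Equal}-$3$-\textsc{Sat}, exploiting the parity mismatch") and you yourself label its uniform execution "the main technical hurdle". Nothing in the sketch specifies the variable or clause gadgets, how the lists confine the gadget images inside $\widehat H$, or how the construction survives the many possible configurations of the odd-red and odd-blue cycles (they may coincide, as in your own example of the $C_4$ with one red edge, overlap partially, or lie far apart, and $H$ may or may not be bi-arc at the same time). Note that in the unbalanced-$C_4$ example the underlying unsigned list problem is polynomial, so all the hardness must be extracted from the interaction of switching with lists --- exactly the step you have not carried out, and one for which neither \Cref{thm:chain} nor \Cref{thm:invo} is available (no bicoloured edges, no invertible pair). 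A complete argument has to do real work here, for instance by eliminating switching via the standard double cover with vertices $(v,+),(v,-)$ and lists $L(v)\times\{+,-\}$ and then proving hardness of the resulting switching-free list problem, or by constructing explicit hardness gadgets for unbalanced even cycles and lifting them to general $\widehat H$ in this class. As written, the "otherwise NP-complete" half of the theorem is established only in the sub-case where $H$ fails to be bi-arc, so the proof is incomplete.
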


For signed trees with possible loops, the general results are technical \cite{mfcs}, but since the focus of this paper is on irreflexive graphs we state the classification in the special case of irreflexive signed trees. Note $F_1$ and $\mathcal{F}$ refer to \Cref{fig:tripleclaw} and $\Cref{fig:forbgraphsirref}$ respectively.

\begin{theorem}{\cite{mfcs}} \label{thm:main_irref}
Let $\widehat{H}$ be an irreflexive tree. If the underlying graph of $\widehat{H}$ contains $F_1$ or if $\widehat{H}$ contains a signed graph from the family $\cal F$, as an induced subgraph, then \lhom{\widehat{H}} is NP-complete. Otherwise, $\widehat{H}$ admits a special min ordering and the problem is polynomial-time solvable.
\end{theorem}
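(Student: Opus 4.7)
The plan is to prove the two directions of the dichotomy separately. The hardness direction is essentially a direct application of the two NP-completeness tools developed earlier. If $\widehat{H}$ contains a member of $\mathcal{F}$ as an induced subgraph, then the chain exhibited in that subgraph (as indicated in \Cref{fig:forbgraphsirref}) is a chain of $\widehat{H}$, so \lhom{\widehat{H}} is NP-complete by \Cref{thm:chain}. If instead the underlying graph $H$ contains $F_1$ as a subgraph, then the invertible pair in \Cref{fig:tripleclaw} yields NP-completeness via \Cref{thm:invo}. In both cases NP-completeness of the subproblem on the forbidden configuration transfers to $\widehat{H}$ by restricting the input lists to the vertex set of that configuration.

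For the polynomial direction, I would assume $\widehat{H}$ is an irreflexive signed tree free of both forbidden patterns and construct a special min ordering explicitly; once this is available, polynomial-time solvability follows from the standard procedure for \lhom{} on bipartite targets with a min ordering, where the specialness condition on bicoloured neighbours enables the correct handling of switching during constraint propagation. My construction would be inductive on $|V(H)|$. Root $H$ at a well-chosen vertex (say, a leaf of the reduced tree obtained by contracting each bicoloured edge into a single vertex), and at every internal vertex order the children so that subtrees connected through a bicoloured edge come first and subtrees connected through a unicoloured edge come afterwards, combining orderings within each group in a prescribed way. This yields the specialness condition directly; the real content of the proof is the min-ordering inequality.

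The main obstacle is the verification of that inequality: whenever $x <_w x'$, $y <_b y'$, and both $xy'$ and $x'y$ are edges of $H$, one must also have $xy \in E(H)$. I expect this to be proved by contradiction. A violation forces a specific configuration on the unique tree path joining $x,x',y,y'$ in $H$, together with short side branches at its endpoints, and the task is to show that any such configuration either contains $F_1$ as a subgraph or induces one of the signed trees of $\mathcal{F}$. I would organise the case analysis by the length of the connecting path and by the sign pattern along it, repeatedly using the fact that a signed tree has no non-trivial switching cycles, so the signs along the path are canonical up to switching at its endpoints. The specialness condition at the offending vertices should then supply exactly the bicoloured edges needed to close the configuration into a member of $\mathcal{F}$, completing the contradiction.
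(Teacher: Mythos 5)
This statement is not proved in the present paper at all: it is quoted from \cite{mfcs}, so there is no in-paper proof to compare against, and your proposal has to be judged on its own. Your hardness half is fine and matches the intended route: since $H$ is a tree, any copy of $F_1$ in $H$ and any member of $\cal F$ occurring as an induced subgraph of $\widehat{H}$ keeps its required non-edges, so the chain (resp.\ invertible pair) of the configuration is a chain (resp.\ invertible pair) of $\widehat{H}$ itself, and \Cref{thm:chain} and \Cref{thm:invo} apply directly; no separate list-restriction argument is even needed.

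The polynomial half, however, has a genuine gap: everything that makes the theorem hard is deferred. Your ordering construction is described only as ``order the children so that subtrees connected through a bicoloured edge come first \dots combining orderings within each group in a prescribed way,'' and the verification of the min-ordering inequality is announced as a case analysis that ``should'' close every violating configuration into $F_1$ or a member of $\cal F$ --- but this implication is exactly the content of the theorem's polynomial direction, and nothing in the proposal establishes it. It is not evident that a one-pass rooted ordering of the kind you describe works (the actual argument in \cite{mfcs} is a long structural analysis of where bicoloured edges can attach in a tree avoiding the forbidden patterns), and without carrying out at least the key structural lemmas the contradiction step is an assertion, not a proof. A smaller but real issue is the algorithmic claim: polynomial-time solvability given a special min ordering is not the ``standard procedure'' for bipartite targets with a min ordering; for signed graphs it is a nontrivial result, which the present paper invokes from \cite{ks,latin} for semi-balanced targets (applicable here because irreflexive signed trees are semi-balanced). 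You should either cite that result explicitly or prove the algorithm, since handling switching on top of constraint propagation is precisely where the difficulty lies.
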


Recall that the tree $F_1$ from \Cref{fig:tripleclaw} has an invertible pair. Thus for irreflexive trees, the only NP-complete cases have a chain or an invertible pair. It is easy to see that irreflexive trees are always semi-balanced. We can similarly interpret \Cref{thm:bordeaux} specialized for irreflexive signed graphs without bicoloured edges. Specifically, if the underlying graph is not bipartite then \Cref{thm:dichotomy} implies the list homomorphism problem is NP-complete. Also note that bipartite signed graphs without bicoloured edges are semi-balanced if and only if they are balanced. Finally, a bipartite graph is a bi-arc graph if and only if it has a min ordering \cite{esa}, which is trivially special as there are no bicoloured edges. Thus \Cref{thm:bordeaux} also implies that for (semi-) balanced irreflexive signed graphs without bicoloured edges, the list homomorphism problem is polynomial-time solvable if there is special min ordering \cite{feder1999list}, and otherwise there is an invertible pair and the problem is NP-complete. Note that a graph without bicoloured edges cannot have a chain.

Extrapolating from these results, Kim and Siggers \cite{ks} conjectured that for all semi-balanced irreflexive signed graphs, the list homomorphism problem is polynomial-time solvable if there is special min ordering, and otherwise there is a chain or an invertible pair and the problem is NP-complete. We have proved this conjecture 
in~\cite{latin}.

\begin{theorem}{\cite{latin}} \label{mak}
For a semi-balanced irreflexive signed graph $\widehat{H}$, \lhom{\widehat H} is polynomial-time solvable if $\widehat{H}$ has a special min ordering; otherwise, $\widehat{H}$ contains a chain or an invertible pair and the problem is NP-complete.
\end{theorem}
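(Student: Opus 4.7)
The plan is to establish the two directions separately: construct an algorithm when a special min ordering exists, and produce an obstruction (a chain or an invertible pair) when it does not. The NP-completeness in the latter case then follows directly from \Cref{thm:chain} and \Cref{thm:invo}, so the substantive content is the algorithm and the structural dichotomy.

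For the algorithmic direction, suppose $\widehat{H}$ admits a special min ordering $(<_b,<_w)$. Since $\widehat{H}$ is semi-balanced, I would first switch $\widehat{H}$ into a normal form in which all unicoloured edges are blue. For an input $\widehat{G}$ with lists $L$, I would process one connected component at a time: fix the image of one vertex, propagate the induced switching throughout the component, and thereby reduce the task to searching for a sign-preserving list homomorphism of the resulting $\widehat{G}'$ into $\widehat{H}$. Then I would apply the standard min-ordering/arc-consistency procedure: iteratively discard from each $L(v)$ any image $u$ that has no compatible partner in some adjacent list, and at the end take the $<$-minimum remaining image at each vertex. The special property, which places all bicoloured neighbours of a vertex $u$ before all of its unicoloured neighbours in the ordering, is exactly what is needed to ensure that the greedy minimum choice still distinguishes bicoloured from unicoloured adjacencies, so that the resulting mapping is a valid sign-preserving homomorphism.

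For the structural direction, I would argue contrapositively: assuming $\widehat{H}$ has no special min ordering, I would construct either a chain or an invertible pair. The natural framework is the \emph{pair digraph} $H^{+}$ on ordered pairs of same-colour-class vertices of $H$, whose arcs encode both the standard min-ordering forcing (derived from the definition of a min ordering) and the additional ``special'' forcing that places every bicoloured neighbour of a vertex before every unicoloured neighbour. Absence of a special min ordering then translates to the presence of a strongly connected component of $H^{+}$ that contains a symmetric pair $\{(u,v),(v,u)\}$. Unpacking the arcs of such a cycle into walks in $\widehat{H}$, I would read off two walks $U,D$ of equal length starting at $u,v$ and ending at $v,u$. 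Depending on whether the obstruction cycle uses at least one arc coming from the special requirement (mixing unicoloured and bicoloured forcing) or not, the walk pair matches the alternation pattern required by the definition of a chain, or else witnesses an invertible pair in the underlying graph of $\widehat{H}$.

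The principal obstacle is this second direction: turning an abstract absence of ordering into a clean chain or invertible pair. Walks extracted from an arbitrary obstruction cycle may meander across unicoloured, bicoloured, and non-edge boundaries in uncontrolled ways, whereas the target definitions (chains, invertible pairs) are rigid in how edge types must alternate and in how the walks must close up. I expect that handling this requires a delicate structural reduction, most likely by induction on the length of a shortest obstruction cycle in $H^{+}$, together with repeated use of the irreflexivity of $\widehat{H}$ (to rule out shortcut loops) and of semi-balance (to switch into a normal form in which the bicoloured/unicoloured alternation can be analysed cleanly). In spirit this extends the special case of \Cref{thm:bordeaux} and the tree-based analysis of \Cref{thm:main_irref} from their respective regimes to the full semi-balanced irreflexive setting.
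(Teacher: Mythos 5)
First, a point of comparison: the paper you were given does not prove this statement at all --- \Cref{mak} is imported verbatim from \cite{latin} (see also \cite{ks}), and the present paper only uses it as a black box (e.g.\ in \Cref{thm:path-poly} and Case~2 of \Cref{thm:cycle-poly}). So there is no internal proof to measure your outline against; what can be judged is whether your sketch would itself constitute a proof, and as written it would not.

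The clearest gap is in your algorithmic direction. You propose to ``fix the image of one vertex, propagate the induced switching throughout the component, and thereby reduce to a sign-preserving list homomorphism,'' and then run arc consistency with the greedy minimum. But the switching of the input graph is not determined by one vertex: in the semi-balanced normal form (all unicoloured edges of $\widehat H$ blue, the rest bicoloured), whether a given edge of $\widehat G$ may map to a unicoloured or to a bicoloured edge of $\widehat H$ interacts globally with the choice of switching at \emph{every} vertex of $\widehat G$, and unbalanced cycles of $\widehat G$ can still admit homomorphisms by routing through bicoloured edges. This is exactly why the known algorithms (and the analogous Case~3 argument in this paper for $\widehat H_1$) need extra machinery beyond arc consistency plus minima --- consistency run separately on the bicoloured subgraph, a decomposition into regions, and a system of linear equations modulo two (or an equivalent device) to decide the switchings; the special property of the ordering enters in controlling how unicoloured images sit above bicoloured ones, not as a one-line fix to the greedy choice. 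Your structural direction (pair-digraph obstruction $\Rightarrow$ chain or invertible pair) is the right kind of plan and matches the spirit of \cite{latin}, but you yourself flag its key step --- converting an arbitrary obstruction cycle into the rigid alternation pattern of a chain or the closing-up pattern of an invertible pair --- as unresolved; that conversion is the substantive content of the cited proof, so as it stands your proposal is a roadmap rather than a proof of either half.
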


Similar facts have been proved for reflexive graphs \cite{arxiv-latin,ks}.

\subsection{Our results}

In this paper we focus on another particular class of irreflexive signed graphs called {\em separable} signed graphs. We say that an irreflexive signed graph $\widehat{H}$ is {\em path-separable (respectively cycle-separable)} if the unicoloured edges of $\widehat{H}$ form a spanning path (respectively cycle) in the underlying graph of $\widehat{H}$. We also say a signed graph is {\em separable} if it is path-separable or cycle-separable. 

We provide a detailed classification of complexity of the corresponding list homomorphism problems, see \Cref{hlavna,thm:Hamilton}. We find that the polynomial cases are nicely structured and rather rare. 

Separable signed graphs are not necessarily semi-balanced, so our results extend beyond \Cref{mak}. Even for semi-balanced separable graphs these results provide much more structural detail for the description of the polynomial cases. We believe that the case of separable signed graphs, together with the case of irreflexive signed trees from \Cref{thm:main_irref}, will play an important role for a full classification of complexity for irreflexive signed graphs.

A short preliminary version of this paper (containing only a small subset of the proofs) has appeared in the conference \cite{separable}, prior to our paper \cite{latin}. In this expanded journal version we provide all proofs, taking advantage of now having \Cref{mak} to simplify some of the arguments. The original proofs can be found in \cite{arXivVersion}.

\subsection{Motivation}

Irreflexive signed graphs are in a sense the core of the problem, see \cite{ks}. As noted earlier, by \Cref{thm:dichotomy}, \lhom{H} is NP-complete unless the underlying graph $H$ is bipartite. There is a natural transformation of each general problem to a problem for a bipartite irreflexive signed graph, akin to what is done for unsigned graphs in~\cite{feder2003bi}; this is nicely explained in~\cite{ks}.

However, for bipartite $H$, we don't have a combinatorial classification beyond the case of trees $H$, except in the case $\widehat{H}$ has no bicoloured edges (when \Cref{thm:bordeaux} applies), or when $\widehat{H}$ has no unicoloured edges (when the problem essentially concerns unsigned graphs and thus is solved by~\cite{feder2003bi}). Therefore we may assume that both bicoloured and unicoloured edges are present. We focus in this paper on those bipartite irreflexive signed graphs $\widehat{H}$ in which the unicoloured edges form simple structures, namely spanning paths and cycles. 

\section{Path-separable signed graphs}

In this section, we consider irreflexive path-separable signed graphs $\widehat{H}$, i.e., ones in which the unicoloured edges form a spanning path $P$ in the underlying graph $H$.
By suitable switching, we may assume the edges of $P$ are all blue. In other words, all the edges of the spanning path $P$ are blue, and all the other edges of $\widehat{H}$ are bicoloured. Thus a path-separable signed graph is semi-balanced. Recall that the distinction between unicoloured and bicoloured edges is independent of switching, thus such a spanning path $P$ is unique. 

\paragraph{Induced cycles imply hardness.} Recall that for any irreflexive signed graph $\widehat{H}$, \lhom{\widehat H} is NP-complete if the underlying graph $H$ contains an odd cycle, since then the s-core of $\widehat{H}$ has at least three edges. Moreover, we now show that the \lhom{\widehat H} is also NP-complete if $H$ contains any induced cycle of length greater than four. Indeed, it suffices to prove this if $H$ is an even cycle of length $k > 4$. If all edges of $H$ are unicoloured, then the problem is NP-complete by \Cref{thm:bordeaux}, since an irreflexive cycle of length $k > 4$ is not a bi-arc graph. If all edges of the cycle $H$ are bicoloured, then we can easily reduce from the previous case. If $H$ contains both unicoloured and bicoloured edges, then $\widehat{H}$ contains an induced subgraph of type a) or b) in the family $\cal F$ in \Cref{fig:forbgraphsirref}, and the problem is NP-complete by \Cref{thm:chain}. There are cases when the subgraphs are not induced, but the chains from the proof of \Cref{thm:chain} are still applicable.

\paragraph{Two important patterns.} We further identify two additional cases of $\widehat{H}$ with NP-complete list homomorphism problems. An \emph{alternating 4-cycle} is a 4-cycle $v_1v_2v_3v_4$ in which the edges $v_1v_2, v_3v_4$ are bicoloured and the edges $v_2v_3, v_4v_1$ unicoloured. A \emph{4-cycle pair} consists of 4-cycles $v_1v_2v_3v_4$ and $v_1v_5v_6v_7$, sharing the vertex $v_1$, in which the edges $v_1v_2, v_1v_5$ are bicoloured, and all other edges are unicoloured. An alternating 4-cycle has the chain $U = v_1, v_4, v_3; D = v_1, v_2, v_3$, and a 4-cycle pair has the chain $U = v_1, v_4, v_3, v_2, v_1; D = v_1, v_5, v_6, v_7, v_1$. Therefore, if a signed graph $\widehat{H}$ contains an alternating 4-cycle or a 4-cycle pair as an induced subgraph, then \lhom{\widehat H} is NP-complete. Note that the latter chain requires only $v_2v_6$ and $v_3v_5$ to be non-edges. The problem remains NP-complete as long as these edges are absent; all other edges with endpoints in different 4-cycles can be present. If both $v_2v_6$ and $v_3v_5$ are bicoloured edges, then there is an alternating 4-cycle $v_2v_3v_5v_6$. Thus we conclude that \lhom{\widehat H} is NP-complete if $\widehat{H}$ contains a 4-cycle pair as a subgraph (not necessarily induced), unless exactly one of $v_3v_5$ or $v_2v_6$ is a bicoloured edge.

\paragraph{Assumptions.} From now on we will assume that $\widehat{H}$ is a path-separable signed graph with the unicoloured edges (all blue) forming a spanning path $P = v_1, \ldots, v_n$. We will assume further that \lhom{\widehat H} is not NP-complete, and derive information on the structure of $\widehat{H}$. In particular, the underlying graph $H$ is bipartite and does not contain any induced cycles of length greater than $4$, and $\widehat{H}$ does not contain an alternating $4$-cycle or a $4$-cycle pair; more generally, $\widehat{H}$ does not contain a chain. If $\widehat{H}$ has no bicoloured edges (and hence no edges not on $P$), then \lhom{\widehat H} is polynomial-time solvable by \Cref{thm:bordeaux}, since a path is a bi-arc graph. If there is a bicoloured edge in $\widehat{H}$, then we may assume there is a bicoloured edge $v_iv_{i+3}$, otherwise there is an induced cycle of length greater than $4$.

\paragraph{Blocks and segments.} A {\em block} in a path-separable signed graph $\widehat{H}$ is a subpath $v_iv_{i+1}v_{i+2}v_{i+3}$ of $P$, with the bicoloured edge $v_iv_{i+3}$. The previous paragraph concluded that $\widehat{H}$ must contain a block. Note that if $v_iv_{i+1}v_{i+2}v_{i+3}$ is a block, then $v_{i+1}v_{i+2}v_{i+3}v_{i+4}$ cannot be a block: in fact, $v_{i+1}v_{i+4}$ cannot be a bicoloured edge, otherwise $\widehat{H}$ would contain an alternating $4$-cycle. However, $v_{i+2}v_{i+3}v_{i+4}v_{i+5}$ can again be a block, and so can $v_{i+4}v_{i+5}v_{i+6}v_{i+7}$, etc. If both $v_iv_{i+1}v_{i+2}v_{i+3}$ and $v_{i+2}v_{i+3}v_{i+4}v_{i+5}$ are blocks then $v_iv_{i+5}$ must be a bicoloured edge, otherwise $v_iv_{i+3}v_{i+2}v_{i+5}$ would induce a signed graph of type a) in family $\cal F$ from \Cref{fig:forbgraphsirref}. A {\em segment} in $\widehat{H}$ is a maximal subpath $v_iv_{i+1} \ldots v_{i+2j+1}$ of $P$ with $j \geq 1$ that has all bicoloured edges $v_{i+e}v_{i+e+3}$, where $e$ is even, $0 \leq e \leq 2j-2$. A {\em maximal} subpath is not properly contained in another such subpath. Thus each subpath $v_{i+e}v_{i+e+1}v_{i+e+2}v_{i+e+3}$ of the segment is a block, and the segment is a consecutively intersecting sequence of blocks; note that it can consist of just one block. Two segments can touch as the second and third segment in \Cref{fig:pathsegment}, or leave a gap as the first and second segment in the same figure.

\paragraph{Sources.} In a segment $v_iv_{i+1} \ldots v_{i+2j+1}$ we call each vertex $v_{i+e}$ with $0 \leq e \leq 2j-2$ a {\em forward source}, and each vertex $v_{i+o}$ with $3 \leq o \leq 2j+1$ a {\em backward source}. Thus forward sources are the beginning vertices of blocks in the segment, and the backward sources are the ends of blocks in the segment. If $a < b$, we say the edge $v_av_b$ is a {\em forward edge} from $v_a$ and a {\em backward edge} from $v_b$. In this terminology, each forward source has a forward edge to its corresponding backward source. Because of the absence of a signed graph of type a) in family $\cal F$ from \Cref{fig:forbgraphsirref}, we can in fact conclude, by the same argument as in the previous paragraph, that each forward source in a segment has forward edges to all backward sources in the segment.

\paragraph{Leaning segments.} We say that a segment $v_iv_{i+1} \ldots v_{i+2j+1}$ is {\em right-leaning} if $v_{i+e}v_{i+e+o}$ is a bicoloured edge for all $e$ is even, $0 \leq e \leq 2j-2$, and {\em all} odd $o \geq 3$; and we say it is {\em left-leaning} if $v_{i+2j+1-e}v_{i+2j+1-e-o}$ is a bicoloured edge for all $e$ even, $0 \leq e \leq 2j-2$ and all odd $o \geq 3$. Thus in a right-leaning segment each forward source has {\em all} possible forward edges (that is, all edges to vertices of opposite colour in the bipartition, including vertices with subscripts greater than $i+2j+1$). The concepts of left-leaning segments, backward sources and backward edges are defined similarly.

\paragraph{Segmented graphs.} We say that a path-separable signed graph $\widehat{H}$ is {\em right-seg\-ment\-ed} if all segments are right-leaning, and there are no edges other than those mandated by this fact. In other words, each forward source has all possible forward edges, and each vertex which is not a forward source has no forward edges. Similarly, we say that a path-separable signed graph $\widehat{H}$ is {\em left-segmented} if all segments are left-leaning, and there are no edges other than those mandated by this fact. In other words, each backward source has all possible backward edges, and each vertex which is not a backward source has no backward edges. Finally, $\widehat{H}$ is {left-right-segmented} if there is a unique segment $v_iv_{i+1} \ldots v_{i+2j+1}$ that is both left-leaning and right-leaning, all segments preceding it are left-leaning, all segments following it are right-leaning, and moreover there are {\em additional} bicoloured edges $v_{i-e}v_{i+2j+o}$ for all even $e \geq 2$ and all odd $o \geq 3$, but no other edges. In other words, vertices $v_1, v_2, \ldots, v_{i+2j+1}$ induce a left-segmented graph, vertices $v_i, v_{i+1}, \ldots, v_n$ induce a right-segmented graph, and in addition to the edges this requires there are all the edges joining $v_{i-e}$ from $v_1, \ldots, v_{i-1}$ to $v_{i+o}$ from $v_{i+2j+2}, \ldots, v_n$, with even $e$ and odd $o$. A {\em segmented graph} is a path-separable signed graph that is right-segmented or left-segmented or left-right-segmented. Note that a blue path without any bicoloured edges is trivially segmented, having no segments at all.

For example, in \Cref{fig:pathsegment} there are three segments, the left-leaning segment $v_5v_6v_7v_8v_9v_{10}$, the left- and right-leaning segment $v_{12}v_{13}v_{14}v_{15}$, and the right-leaning segment $v_{15}v_{16}v_{17}v_{18}v_{19}v_{20}$. Thus this is a left-right-segmented signed graph.

\begin{figure}[t]
\centering
\includegraphics[width=\textwidth]{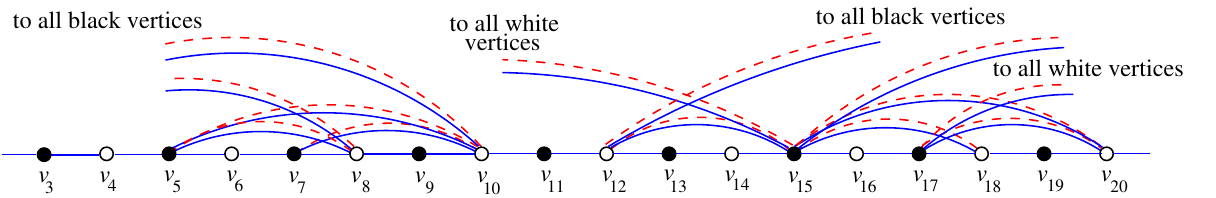}
\caption{An example of a left-right-segmented signed graph. The additional bicoloured edges from all white vertices before $v_{12}$ to all black vertices after $v_{15}$ are not shown.}
\label{fig:pathsegment}
\end{figure}

We are now ready to formulate the first of the two main results of this paper.

\begin{theorem}\label{hlavna}
Let $\widehat{H}$ be a path-separable signed graph. Then
\lhom{\widehat H} is polynomial-time solvable if
$\widehat{H}$ is a segmented signed graph. Otherwise, the problem is NP-complete.
\end{theorem}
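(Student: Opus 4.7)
The plan is to establish both directions of the dichotomy. Since the unicoloured edges of a path-separable $\widehat H$ form the spanning path $P$ which, after a suitable switching, may be taken entirely blue, every other edge becomes bicoloured and $\widehat H$ is semi-balanced. Thus \Cref{mak} is available: the polynomial direction reduces to producing a special min ordering when $\widehat H$ is segmented, and the hardness direction amounts to exhibiting a chain, invertible pair, or one of the small obstructions (alternating 4-cycle, 4-cycle pair, long induced cycle) identified in the paragraphs preceding the statement whenever $\widehat H$ is not segmented.

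For the polynomial direction, I would construct the special min ordering explicitly from the segmented structure. In the right-segmented case, place the white vertices (odd path indices) in their natural path order and the black vertices (even indices) in the \emph{reverse} of the path order. Because every forward source has a bicoloured edge to every later opposite-parity vertex while every non-source has no forward bicoloured edge at all, a short case check shows that whenever $v_a<_w v_{a'}$, $v_b<_b v_{b'}$ and both $v_av_{b'}$ and $v_{a'}v_b$ are edges, then $v_av_b$ is also an edge. Reversing the black order ensures that the at-most-two path (unicoloured) neighbours of any white vertex appear \emph{after} all of its bicoloured neighbours, and symmetrically for the black vertices, so the ordering is special. The left-segmented case is the mirror image. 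For the left-right-segmented case I would splice the left-segmented ordering on $v_1,\ldots,v_{i+2j+1}$ with the right-segmented ordering on $v_i,\ldots,v_n$, using the doubly-leaning pivot segment as the seam; the mandated cross edges $v_{i-e}v_{i+2j+o}$ are precisely what is needed for the spliced order to remain a min ordering and to preserve the bicoloured-first rule.

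For the hardness direction, suppose $\widehat H$ is path-separable but not segmented, and assume toward contradiction that $\widehat H$ avoids every already-known hardness obstruction. I would argue in three stages. First, each individual segment must be left-leaning, right-leaning, or both: if a forward source $v_a$ of a segment admits a bicoloured edge to some $v_b$ but misses the bicoloured edge to another legal target $v_c$, then $v_a,v_b,v_c$ together with a block endpoint induce either a graph from the family $\cal F$ of \Cref{fig:forbgraphsirref} or an alternating 4-cycle, both of which carry a chain. Second, if a right-leaning segment is adjacent to a left-leaning one with no doubly-leaning pivot in between, I would build an invertible pair by alternating along a forward bicoloured edge of the right-leaning segment and a backward bicoloured edge of the left-leaning one. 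Third, when left-leaning segments precede a doubly-leaning pivot and right-leaning segments follow it, the cross-segment bicoloured edges must match the left-right-segmented pattern exactly: any missing mandated edge yields a chain threading the pivot, and any unmandated edge yields an alternating 4-cycle or 4-cycle pair with one of the obligatory forward/backward edges.

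The main obstacle is the third step. The intra-segment and adjacent-segment analyses are local and can be disposed of by examining constant-size configurations, but the cross-segment analysis must rule out every deviant edge pattern under the no-chain and no-invertible-pair assumptions simultaneously, and the interactions between the pivot segment and edges crossing it are delicate. In particular, controlling both surplus and deficit cross edges at once---so that exactly the left-right-segmented pattern survives---is where the case analysis is most intricate.
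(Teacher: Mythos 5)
Your high-level strategy matches the paper's (both directions go through \Cref{mak}: a special min ordering for segmented graphs, and explicit obstructions otherwise), but the polynomial direction as you wrote it is wrong. For a right-segmented graph, ordering the white vertices in path order and the black vertices in reverse path order is \emph{not} a min ordering. Concretely, take $P=v_1v_2\dots v_8$ all blue, with bicoloured edges $v_1v_4, v_1v_6, v_1v_8, v_5v_8$; this is right-segmented with two segments $v_1v_2v_3v_4$ and $v_5v_6v_7v_8$ and forward sources $v_1, v_5$. In your ordering $v_3 <_w v_5$ and $v_8 <_b v_4$, and both $v_3v_4$ (path edge) and $v_5v_8$ (forward edge of the source $v_5$) are edges, yet $v_3v_8$ is not an edge, since $v_3$ is not a forward source. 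The point your ``short case check'' misses is that a later forward source can have forward edges that an earlier non-source lacks, so path order on a colour class cannot work; the ordering the paper uses lists, within each colour class, the forward sources first in forward path order and only then the remaining vertices in backward path order (and a more elaborate four-block ordering for the left-right-segmented case). With that correction the polynomial side goes through.

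On the hardness side your three-stage plan has the right flavour but is not a proof, and it diverges from how the paper actually closes the argument. Stage 1 is not a purely local constant-size check: that a forward source of a segment has \emph{all} forward edges (i.e.\ the segment is leaning) is derived in the paper globally, by defining when one of two consecutive segments \emph{precedes} the other, proving (via chains and the absence of long induced cycles) that exactly one always precedes the other, and then sweeping by induction from the last segment using type~b) of the family $\cal F$ in \Cref{fig:forbgraphsirref}; the ends of the path require separate ``auxiliary end-segments'', which your outline ignores entirely. Stage 2 asserts an invertible pair without any construction; the corresponding configuration (a segment preceded by both neighbours) is ruled out in the paper by a chain (again type~b) of $\cal F$), and no invertible pair is ever needed in the path-separable case. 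Stage 3, which you yourself flag as the delicate part, is exactly where the paper does real work: missing mandated cross edges $v_{a-e}v_{b+o}$ are forced by explicit chains built across the pivot segment, and surplus edges are excluded by the ``crucial observation'' that no bicoloured edge joins two vertices strictly between consecutive segments. As written, your hardness direction is a plausible outline of the same contrapositive structure, but the decisive case analyses are absent, and the one fully specified step of your proposal (the min ordering) is incorrect.
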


We now focus on a proof of this theorem. For this case, we can use \Cref{mak} because, as we noted above, a path separable graph is semi-balanced.

\subsection{The NP-complete cases}

\begin{theorem}\label{thm:path-npc}
Let $\widehat{H}$ be a path-separable signed graph which is not segmented. Then \lhom{\widehat H} is NP-complete.
\end{theorem}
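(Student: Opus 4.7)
The plan is to exploit \Cref{mak}: a path-separable signed graph is semi-balanced irreflexive, so if $\widehat{H}$ has no special min ordering then it must contain a chain or an invertible pair, and in either case NP-completeness of \lhom{\widehat H} follows via \Cref{thm:chain} and \Cref{thm:invo}. Thus it suffices to show that when a path-separable $\widehat{H}$ is not segmented, it contains one of these two structures directly, without reasoning about orderings.

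First I would dispose of the global obstructions already recorded in the preceding discussion: if the underlying graph $H$ has an odd cycle or an induced cycle of length greater than four, or if $\widehat{H}$ contains an alternating $4$-cycle or a $4$-cycle pair (subject to the qualifications described above), NP-completeness is immediate. So I may henceforth assume none of these substructures is present, and work inside the established segment-and-block framework: the relevant portion of $P$ is partitioned into segments of consecutive blocks; within each segment every forward source is joined by a bicoloured edge to every backward source; and alternating-$4$-cycle exclusions regulate how consecutive segments may interact.

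The next step is a case analysis of the ways in which $\widehat{H}$ can fail to match one of the three segmented templates (right-segmented, left-segmented, left-right-segmented). The failures fall into three broad types: (i) some segment is neither left-leaning nor right-leaning; (ii) the leanings of the segments are globally inconsistent, for instance a right-leaning segment followed later by a left-leaning segment with no \emph{hinge} segment playing the role of the unique both-leaning segment of a left-right-segmented graph; and (iii) there are stray bicoloured edges not mandated by any template, for example between two non-sources, from a non-source to a source at the wrong parity, or edges that break the required all-to-all connection between the two sides of a left-right-segmented hinge.

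For each sub-case I would exhibit either a chain or an invertible pair. Chains typically arise from short local patterns: a stray edge that completes an alternating $4$-cycle, or a missing edge that exposes a $4$-cycle pair across two segments. When no such short pattern is forced, the construction would mimic the one sketched for $F_1$ in the invertible pair paragraph: a non-edge pair $a,b$ is chosen inside a problematic configuration, and walks $U,D$ are assembled by alternating on blue path edges at one endpoint while traversing a long bicoloured route at the other, then reversing roles to swap $a$ and $b$, in such a way that every diagonal $d_iu_{i+1}$ is a non-edge of $\widehat{H}$. I expect this last point to be the main obstacle: verifying the absence of every diagonal edge requires a detailed and uniform argument exploiting which vertices are sources (hence bicoloured-heavy) versus non-sources (whose bicoloured neighbourhood is tightly controlled by the segmented templates), and the bookkeeping has to be carried out across all the non-segmented configurations enumerated above.
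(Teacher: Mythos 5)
Your plan correctly identifies the hardness sources the paper uses (chains via \Cref{thm:chain}, plus the preliminary exclusions of odd cycles, long induced cycles, alternating $4$-cycles and $4$-cycle pairs), but as written it is a proof outline, not a proof: the entire combinatorial core is deferred. Saying that the failures to be segmented fall into types (i)--(iii) and that "for each sub-case I would exhibit either a chain or an invertible pair" is precisely the part that needs to be done, and it is nontrivial. The paper's argument is organized quite differently and more sharply: assuming no chain is present, it first proves a structural claim that for any two consecutive segments one must \emph{precede} the other (the last forward source of the earlier segment retains forward edges past the start of the next segment, or symmetrically), established by explicit chain constructions such as $U = v_i, v_{i+1}, v_{i+2}, v_{i+3}, v_i, v_{j+3}$, $D = v_i, v_{j+3}, v_{j+2}, v_{j+1}, v_{j+2}, v_{j+3}$, together with a "crucial observation" forbidding bicoloured edges between the two segments; it then shows a segment cannot be preceded by both neighbours (a type~b) configuration of $\cal F$ arises), introduces auxiliary end-segments, and runs an induction along $P$ showing that consistent precedence forces exactly the right-, left-, or left-right-segmented structure, including the forced hinge edges $v_{a-e}v_{b+o}$ in the left-right case (whose absence again yields explicit chains). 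None of this case analysis, nor any concrete chain, appears in your proposal, and your taxonomy (i)--(iii) is not shown to be exhaustive or to lead to the needed substructures.

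Two further points. First, your appeal to \Cref{mak} is idle: to use it for hardness you would have to prove there is no special min ordering, which you do not attempt; since your actual plan is to exhibit chains or invertible pairs directly, \Cref{thm:chain} and \Cref{thm:invo} already give NP-completeness and \Cref{mak} plays no role. Second, your anticipated main obstacle --- building long $F_1$-style invertible pairs and verifying all diagonal non-edges --- is a detour the paper never needs: in the path-separable setting every non-segmented configuration is handled by short chains (together with the preliminary reductions), so the difficulty you foresee is an artifact of an approach that is harder than necessary, while the genuinely needed work (the precedence claim and the propagation induction) is missing.
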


\begin{proof}
Let $\widehat{H}$ be a path-separable signed graph for which the list homomorphism problem is not NP-complete, with the unicoloured edges all blue and forming the spanning path $P = v_1, \ldots, v_n$. We will show that $\widehat{H}$ is a segmented signed graph. It follows from the discussion above that if there are bicoloured edges, then there are segments, but the problem is that there could be possibly some other bicoloured edges.

\paragraph{A crucial observation.} Consider two consecutive segments, a segment $S$, ending with the block $v_i$, $v_{i+1}$, $v_{i+2}$, $v_{i+3}$, and a segment $S'$ beginning with the block $v_j$, $v_{j+1}$, $v_{j+2}$, $v_{j+3}$, where $i+3 \leq j$. Note that there can be no bicoloured edge joining two vertices from the set $\{v_{i+1}, \ldots, v_{j+2}\}$, because the segments $S, S'$ are maximal and consecutive. If there is any edge joining two vertices of that set, there would have to be one forming a $4$-cycle with the unicoloured edges, since the underlying graph has no induced cycles longer than $4$.

We emphasize that this crucial observation will be repeatedly used in the arguments in the following paragraphs, usually without specifically mentioning it.

\paragraph{A claim.} We claim that either $v_i$ (the last forward source of the segment $S$) has forward edges to all $v_{i+3}, v_{i+5}, v_{i+7}, \ldots, v_{s}$ for some $s > j+1$, or symmetrically, $v_{j+3}$ (the first backward source of the next segment $S'$)  has backward edges to all $v_j, v_{j-2}, v_{j-4}, \ldots, v_{t}$ for some $t < i+2$. In the former case we say that $S$ {\em precedes} $S'$, in the latter case we say that $S'$ {\em precedes} $S$.

We now prove the claim by showing either $S$ precedes $S'$ or vice versa.

\paragraph{Case 1.} Suppose first that $i$ and $j$ have the same parity (are both even or both odd). There must be other bicoloured edges, otherwise there is a signed graph of type a) from the family $\cal F$ in \Cref{fig:forbgraphsirref} induced on the vertices $v_i$, $v_{i+3}$, $v_{i+4}, \ldots, v_j$, $v_{j+3}$, and hence a chain in $\widehat{H}$. Therefore, using the above crucial observation, there must be extra edges incident with $v_i$ or $v_{j+3}$. Moreover, as long as there is no edge $v_iv_{j+3}$, there would always be an induced subgraph of type a) from the family $\cal F$.
On the other hand, if $v_iv_{j+3}$ is an edge, then there is a chain with $U = v_i, v_{i+1}, v_{i+2}, v_{i+3}, v_i, v_{j+3}$ and $D = v_i, v_{j+3}, v_{j+2}, v_{j+1}, v_{j+2}, v_{j+3}$, unless $v_{i+2}v_{j+3}$ or $v_iv_{j+1}$ is an edge as there is no edge $v_{i+3}v_{j+2}$ by our crucial observation above. Note that both $v_{i+2}v_{j+3}$ and $v_iv_{j+1}$ cannot be edges, because of the chain $U = v_i, v_{i+1}, v_{i+2}, v_{j+3}$, $D = v_i, v_{j+1}, v_{j+2}, v_{j+3}$. Assume that $v_iv_{j+1}$ is an edge. Now we can repeat the argument: there would be a chain with $U = v_i, v_{i+1}, v_{i+2}, v_{i+3}, v_i, v_{j+1}$ and $D = v_i, v_{j+1}, v_j, v_{j-1}, v_j, v_{j+1}$, unless $v_iv_{j-1}$ is an edge. In this case, we don't need to consider $v_{i+2}v_{j+1}$, since both lie in $\{v_{i+1}, \ldots, v_{j+2}\}$. We can continue this way until this argument implies the already existing edge $v_iv_{i+3}$, and conclude that $v_i$ is adjacent to all $v_{j+3}, v_{j+1}, v_{j-1}, \ldots, v_{i+3}$, which proves the claim with $s=j+3$. If $v_{i+2}v_{j+3}$ is an edge, we conclude symmetrically that the claim holds for $t=i$.

\paragraph{Case 2.} Now assume that the parity of $i$ and $j$ is different. This happens, for instance, when $i+3 = j$: in this case, we would have a $4$-cycle pair unless one of  $v_iv_{j+2}, v_{i+1}v_{j+3}$ is an edge. Both cannot be edges, as there would be an alternating $4$-cycle. This verifies the claim when $i+3=j$. Otherwise, there again is a signed graph of type a) from the family $\cal F$ in \Cref{fig:forbgraphsirref}, induced on the vertices $v_i, v_{i+3}, v_{i+4}, \ldots, v_j, v_{j+3}$, so there must be extra edges incident with $v_i$ or $v_{j+3}$. Moreover, there would always remain such an induced subgraph unless there is a vertex $v_p$ with $i+3 \leq p \leq j$ that is adjacent to both $v_i$ and $v_{j+3}$. Thus let $v_p$ be such a vertex. 

We first show that $p$ can be chosen to be $j$ or $i+3$, i.e., that $v_j$ is adjacent to $v_i$, or $v_{j+3}$ is adjacent to $v_{i+3}$. Indeed, if $v_iv_j, v_{i+3}v_{j+3}$ are not edges, then $v_iv_{j+2}$ must also not be an edge (else we obtain a cycle of length greater than $4$), and we have the chain $$U = v_i, v_{i+1}, v_{i+2}, v_{i+3}, v_i, v_p, v_{j+3},\quad D = v_i, v_p, v_{j+3}, v_{j+2}, v_{j+1}, v_{j+2}, v_{j+3}.$$ Recall that we are still using the crucial observation that there is no bicoloured edge joining two vertices from the set $\{v_{i+1}, \ldots, v_{j+2}\}$.

Consider now the case that $p=j$ (or symmetrically $p=i+3$). Since $v_iv_p$ is an edge and there are no induced cycles of length greater than $4$, the vertex $v_i$ must also be adjacent to $v_{j-2}, v_{j-4}, \ldots, v_{i+5}$. Moreover, using again the crucial observation, $v_i$ is also adjacent to $v_{j+2}$, as otherwise we would have the chain $U = v_i, v_{i+1}, v_{i+2}, v_{i+3}, v_i, v_j$ and $D = v_i, v_j, v_{j+1}, v_{j+2}, v_{j+1}, v_j$. Thus the claim holds, with $s=j+2$. In the case $p=i+3$, we obtain a symmetric situation, proving the claim with $t=i+1$.

We conclude that for any two consecutive segments, exactly one precedes the other.

\paragraph{Auxiliary segments.} For technical reasons, we also introduce two {\em auxiliary} segments, calling all other segments {\em normal}. If the first normal segment $S$ of $\widehat{H}$ starts at $v_i$ with $i > 2$, we introduce the {\em left end-segment} to consist of the vertices $v_1, v_2, \ldots, v_i$. We say that the left end-segment {\em precedes}  $S$ if there is no edge $v_{i-2}v_{i+3}$, and we say that $S$ precedes the left end-segment if $v_{i-2}v_{i+3}$ is an edge. Similarly, if the last normal segment $S'$ ends at $v_k$ with $k < n-1$, the {\em right end-segment} consists of the vertices $v_k, v_{k+1}, \ldots, v_n$. The right end-segment {\em precedes}  $S'$ if $v_{k-3}v_{k+2}$ is not an edge, and $S'$ precedes the right end-segment if $v_{k-3}v_{k+2}$ is an edge. Then it is still true that for any two consecutive segments, one precedes the other.

\paragraph{A special situation.} Suppose that we have the special situation where \emph{each segment (including the end-segments) precedes the next segment}. Consider again the last normal segment $S'$, ending in block $v_{k-3}, v_{k-2}, v_{k-1}, v_k$. We first note that since there are no induced cycles of length greater than $4$, and no blocks after the block $v_{k-3}, v_{k-2}, v_{k-1}, v_k$, there cannot be any forward edges from $v_{k-1}, v_k, v_{k+1}, \ldots$. By the same argument and the absence of alternating $4$-cycles, there are no forward edges from $v_{k-2}$ either. 

Our special assumption implies that $v_{k-3}v_{k+2}$ is an edge. Then $v_{k-3}$ has also an edge to $v_{k+4}$, otherwise we have a signed graph of type b) from family $\cal F$ in \Cref{fig:forbgraphsirref}, induced on the vertices $v_{k-1}, v_{k-2}, v_{k-3}, v_{k+2}, v_{k+3}, v_{k+4}$. It is easy to check that the subgraph is induced because otherwise there would be either another block, or an alternating $4$-cycle. Then we argue similarly that $v_{k-3}$ has also an edge to $v_{k+6}$, and so on, concluding by induction that the last forward source $v_{k-3}$ has all possible forward edges, and that no vertex after $v_{k-3}$ has any forward edges. Because of the absence of alternating $4$-cycles, also the vertex $v_{k-4}$ has no forward edges, so if $v_{k-5}$ is a forward source in $S'$, we can use the same arguments to conclude it has all forward edges. Thus each forward source $v_{k-o}$ of $S'$ (with odd $o > 3$) has all possible forward edges. We conclude the last segment $S'$ is right-leaning, and there are no other forward edges (starting in its vertices or later) than those mandated by this fact.

We proceed by induction from the last segment to the first segment to show that in this special situation all segments are right-leaning and there are no other forward edges at all. The proof is analogous to the preceding paragraph. Consider for instance a segment $S$ ending in block $v_i, v_{i+1}, v_{i+2}, v_{i+3}$: since it precedes the next block, its last forward source, $v_i$ has all possible forward edges until $v_s$ where $s > j+1$ and the next segment begins with $v_j$. Now the arguments can be repeated, starting with avoiding a signed graph of type b) from family $\cal F$ in \Cref{fig:forbgraphsirref} induced on the vertices $v_{i+2}, v_{i+1}, v_i, v_s, v_{s+1}, v_{s+2}$. Finally, the vertices in the left end-segment cannot have any forward edges, as the absence of other blocks and of induced cycles of length greater than $4$ implies there would have to be an edge $v_{f-5}v_f$ where $v_f$ is the first backward source, contrary to the assumption that the left end-segment precedes the first normal segment.

Thus we have proved that if each segment precedes the next segment, then $\widehat{H}$ is a right-segmented graph. By symmetric arguments, we obtain the case of left-segmented signed graphs by assuming that each segment precedes the previous segment. It remains to consider the cases where some segment precedes, or is preceded by, both its left and right neighbours. \emph{Our goal is to prove that in that case, \lhom{\widehat H} is NP-complete if $\widehat H$ is not left-right-segmented.}

\paragraph{The general situation.} It turns out that the case where two segments $S_1$ and $S_3$ both precede the intermediate segment $S_2$ is impossible. Suppose $S_2$ has vertices $v_a, v_{a+1}, \ldots, v_b$; in particular, this implies that $v_av_b$ is an edge. Since each segment before $S_2$ precedes the next segment, the previous arguments apply to the portion of the vertices before $S_2$, and in particular the vertex $v_{a-2}$ is not a forward source, hence has no forward edges; therefore $v_{a-2}$ is not adjacent to $v_b$. By a symmetric argument, there is no edge $v_av_{b+2}$. There is also no edge $v_{a-1}v_{b+1}$ because it would form an alternating $4$-cycle with $v_av_b$. Therefore, $v_{a-2}, v_{a-1}, v_a, v_b, v_{b+1}, v_{b+2}$ induce a signed graph of type b) from family $\cal F$ in \Cref{fig:forbgraphsirref}, a contradiction.

We conclude that if $S_1$ precedes the next segment $S_2$, then $S_2$ must precede the following segment $S_3$, and so on, and similarly if $S_3$ precedes the previous segment~$S_2$.

Hence it remains only to consider the situation where a unique segment $S_2$, with vertices $v_a, v_{a+1}, \ldots, v_b$ precedes both its left neighbour $S_1$ and its right neighbour $S_3$ and to the left of $S_1$ each segment precedes its left neighbour, and to the right of $S_3$ each segment precedes its right neighbour. This implies that all segments before $S_2$ are left-leaning, all segments after $S_2$ are right-leaning, while $S_2$ is both left-leaning and right-leaning.

To prove that in this situation the signed graph $\widehat{H}$ is left-right-segmented, we show that all edges $v_{a-e}v_{a+o}$ are present, with $e$ even and $o$ odd. This is obvious when $e=0$ and $o \leq b-a$, by the observations following the definition of a segment. We also have the edges $v_av_{b+2}, v_{a-2}v_b$ since the segment $S_2$ is both left-leaning and right-leaning. Then we must have the edge $v_{a-2}v_{b+2}$ else there would be the chain $U = v_a, v_{a-1}, v_{a-2}, v_b, D = v_a, v_{b+2}, v_{b+1}, v_b$. We have the edge $v_av_{b+4}$ since $v_a$ is a forward source, and thus we must also have the edge $v_{a-2}v_{b+4}$; otherwise, there is the chain $U = v_a, v_{a-1}, v_{a-2}, v_{b+2}, D = v_a, v_{b+4}, v_{b+3}, v_{b+2}$. Continuing this way by induction on $e+o$ we conclude that all edges $v_{a-e}v_{a+o}$, with $e$ even and $o$ odd, must be present. This completes the proof of NP-completeness for any path-separable signed graph that is not segmented.
\end{proof}

\subsection{The polynomial cases}\label{sec:polynomial}

To show that \lhom{\widehat H} is polynomial when $\widehat{H}$ is a segmented signed graph, we show that $\widehat{H}$ has a special bipartite min ordering.

\begin{theorem}\label{thm:path-poly}
Let $\widehat{H}$ be a path-separable signed graph. If
$\widehat{H}$ is a segmented signed graph, then $\widehat{H}$ has a special bipartite min ordering and thus, 
\lhom{\widehat H} is polynomial-time solvable.
\end{theorem}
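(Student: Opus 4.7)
Plan: The strategy is to construct an explicit linear ordering on each of the two colour classes of $\widehat H$, case by case on whether $\widehat H$ is right-, left-, or left-right-segmented, and to verify that the resulting pair of orderings is a special bipartite min ordering; the polynomial-time solvability then follows, for instance by invoking \Cref{mak}, since path-separable graphs are semi-balanced.

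For a right-segmented $\widehat H$, I propose to order each colour class by listing its forward sources first in increasing order of index, followed by the remaining vertices of that class in decreasing order of index. The special condition (all bicoloured neighbours before all unicoloured neighbours) is verified using two facts: the unicoloured neighbours of a vertex $v_a$ are exactly $v_{a-1}$ and $v_{a+1}$, and neither of them can be a forward source when $v_a$ itself is one (otherwise an alternating $4$-cycle would arise); moreover the bicoloured neighbours of $v_a$ lying to its left are forward sources at indices $a-3, a-5,\dots$, which already sit in the forward-source prefix of the opposite-class ordering, while the bicoloured neighbours of $v_a$ to its right all have index $\geq a+3$, so they precede $v_{a+1}$ in the decreasing non-forward-source suffix. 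The min condition reduces to a case analysis on which half (forward-source prefix or non-forward-source suffix) each of the four vertices $v_x, v_{x'}, v_y, v_{y'}$ belongs to, using the simple edge characterisation that in a right-segmented graph $v_av_b$ is an edge iff $|a-b|=1$, or $|a-b|$ is odd and at least $3$ and $\min(a,b)$ is a forward source. In each case the required edge $v_xv_y$ is obtained from short inequalities on indices together with this characterisation.

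The left-segmented case follows by the symmetry obtained from reversing the path $P$, which turns a left-segmented graph into a right-segmented one and transports the ordering back. For the left-right-segmented case with distinguished two-sided segment $S_2 = v_iv_{i+1}\dots v_{i+2j+1}$, the two orderings are spliced together: within each colour class, list the backward sources strictly to the left of $S_2$ in decreasing order of index, then the forward sources of $S_2$ and of the segments to its right in increasing order, and finally the remaining non-source vertices in a way compatible with the two halves. The additional ``cross'' bicoloured edges $v_{i-e}v_{i+2j+o}$ (with $e$ even and at least $2$, $o$ odd and at least $3$) that are built into the definition of a left-right-segmented graph are exactly what is needed to prevent min-ordering violations across the boundary between the two halves.

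The main technical obstacle is the verification of the min condition in the left-right-segmented case, where edges may jump across $S_2$ and the clean single-source characterisation of edges no longer applies. The key fact that makes the argument go through is that every backward source strictly to the left of $S_2$ is joined by a bicoloured edge to every forward source in $S_2$ or to its right, so any would-be counterexample to the min condition at the interface is absorbed by one of the cross edges. As a safety net, since $\widehat H$ is semi-balanced one could alternatively appeal directly to \Cref{mak}, once one has verified that a segmented graph contains neither a chain nor an invertible pair using arguments close to those already developed in the proof of \Cref{thm:path-npc}.
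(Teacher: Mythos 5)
Your Case~1 is essentially the paper's: the same ordering (forward sources of each colour class in forward order, followed by the remaining vertices of that class in backward order), with a verification along the same lines, and the final appeal to \Cref{mak} via semi-balancedness is also how the paper concludes. The gap is in the left-right-segmented case, which you yourself identify as the main obstacle but do not actually close. Your ordering is underspecified exactly where the difficulty sits (``the remaining non-source vertices in a way compatible with the two halves'' is not a construction), and the pattern you describe appears to be the same for both colour classes. A working ordering cannot be symmetric in the two classes: with $U$ the part of $P$ up to the successor $a'$ of the first vertex $a$ of the two-sided segment, $V$ the rest, $L$ the backward sources and $R$ the forward sources, the paper orders the white vertices as $U\cap L$ (backward), $U\setminus L$ (forward), $V\cap R$ (forward), $V\setminus R$ (backward), but the black vertices as $V\cap R$, $V\setminus R$, $U\cap L$, $U\setminus L$. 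The $V$-side must come first for black vertices because every white vertex of $U$ is joined by a bicoloured cross edge to every black vertex of $V$, so for a white non-source in $U$ the special condition forces all blacks of $V$ to precede its unicoloured path neighbours, which lie in $U$; an ordering that places the non-sources of $U$ early among the blacks violates this.

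Moreover, your ``key fact'' is false as stated: the additional edges of a left-right-segmented graph join white vertices of $U$ to black vertices of $V$ only, so a black backward source to the left of $S_2$ is \emph{not} adjacent to the white forward sources to the right of $S_2$ (the paper's interface observation is precisely that no black vertex of $U$ meets a white vertex of $V$, apart from $a'b'$). Hence the ``absorption by cross edges'' argument does not go through in the form you give; what is needed is the explicit asymmetric ordering above together with the case analysis on which block of the ordering $x$ lies in. Finally, the proposed safety net is not a proof either: \Cref{thm:path-npc} derives the segmented structure \emph{from} the absence of chains (non-hardness); it does not show that a segmented graph contains no chain and, more importantly, no invertible pair, and verifying the absence of an invertible pair directly is not easier than exhibiting the special min ordering that \Cref{mak} asks for.
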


\begin{proof}
We distinguish two cases: $\widehat H$ being right-segmented or left-right-seg\-men\-ted.

\paragraph{Case 1.} We now describe \emph{a special bipartite min ordering of the vertices for the case of a right-segmented signed graph}, with the unicoloured path $P$.  Consider two white vertices $u$ and $v$ that are forward sources such that $u$ precedes $v$ on $P$. Then all forward neighbours of $v$ are also forward neighbours of $u$, and all backward neighbours of $u$ are also backward neighbours of $v$.  White vertices $z$ that are not forward sources have edges from all black forward sources $w$ that precede $z$ on $P$.  We now construct a bipartite min ordering $<$: order the white vertices that are forward sources in the forward order, then order the remaining white vertices in the backward order.  The same ordering is applied on black vertices. It now follows from our observations above that this is a special min ordering. For left-segmented graphs, the ordering is similar.

\paragraph{Case 2.} We now describe {\em a special min ordering $<$ for the case of a left-right-seg\-men\-ted signed graph $\widehat{H}$}; we consider its vertices in the order of the unicoloured path $P$. We may assume the left-right-leaning segment begins with a white vertex $a$ and ends with a black vertex $b$. We denote by $a'$ the (black) successor of $a$ on $P$, and by $b'$ the (white) successor of $a'$ on $P$. We also denote by $L$ the set of backward sources and by $R$ the set of forward sources of $\widehat{H}$, and denote by $U$ the portion of $P$ from its first vertex to $a'$, and by $V$ the remaining portion of $P$, from $b'$ to its last vertex. Then the min ordering $<$ we construct has white vertices ordered as follows: first the vertices in $U \cap L$ listed in backward order on $P$, then the vertices of $U \setminus L$ listed in forward order on $P$, followed by the vertices in $V \cap R$ in forward order, and then the vertices of $V \setminus R$ in backward order. Similarly, the black vertices are ordered as follows: first the vertices of $V \cap R$ in forward order, then vertices of $V \setminus R$ in backward order, then vertices of $U \cap L$ in backward order, and finally the vertices of $U \setminus L$ in forward order.

To check that $<$ is a min ordering we consider where could a violation of the min ordering property lie. A violation would consist of white vertices $x < y$ and black vertices $s < t$ such that $xt, ys$ are edges of $\widehat{H}$ but $xs$ is not.

We observe that all white vertices in $U$ join all black vertices in $V$, and no black vertex in $U$ joins a white vertex in $V$ with the exception of $a'$ joining $b'$.

\begin{enumerate}
	\item \emph{$x$ lies in $U \cap L$:} In this case, $x$ is adjacent to all black vertices in $V$, all black vertices before $x$ on $P$ and to its immediate successor and predecessor on $P$. As $x$ belongs to $L$, its predecessor and its successor belong to $U \backslash L$ and are its only unicoloured neighbours.  Therefore the non-neighbours of $x$ are in $U \backslash L$, and ordered in $<$ later than its neighbours. A violation cannot occur.  Moreover, we observe that its bicoloured neighbours all precede its unicoloured neighbours in $<$ and hence we also verify the special property of a min ordering for $x$.

	\item \emph{$x$ lies in $U \backslash L$:} This situation implies $s \in U$ and since $s < t$, $t \in U$ also.  If $s$ follows $x$ on $P$, then $s \in U \backslash L$ giving $t \in U \backslash L$ which means $xt$ cannot be an edge.  On the other hand, if $s$ precedes $x$ on $P$, the edge $ys$ implies $y \in U$ and the ordering $x < y$ implies $y \in U \backslash L$ giving $ys$ cannot be an edge.  It is easy to check that the special property holds in this case also. 

	\item \emph{$x \in V \backslash R$:}  Since $x < y$, we must have $y \in V \backslash R$ as well and $y$ precedes $x$ on $P$.  The only black vertices adjacent to $y$ with a bicoloured edge must lie to the left of $y$ on $P$ and therefore to the left of $x$. Such vertices join $x$ with a bicoloured edge as well.  Thus $s \not\in R$ and must be adjacent to $y$ by a unicoloured edge. Since $s < t$ and $t \in V$ (no black vertex in $U$ can join $x$), we have $t$ precedes $s$ on $P$ and $t \not\in L$.  Now $tx$ cannot be an edge.  The special property is straightforward to verify.

	\item \emph{$x \in V \cap R$:} Since $x$ is a forward source $s$ must precede $x$ on $P$. If $s \in V$, then $s \in V \backslash R$ as $sx$ is not an edge.  Since $xt$ is an edge, $t \in V$ and $s < t$ implies $t \in V \backslash R$ contradicting $xt$ is an edge.  If $s \in U$, then $t \in U$ contradicting the existence of at least one of $ys$ or $xt$.
\end{enumerate}

It is easy to check that the special property holds in all cases, and that it also holds for all black vertices.
Thus our ordering is a special min ordering and we can use the result from~\cite{ks,latin} which asserts that for semi-balanced signed graphs, the existence of a special min ordering ensures the existence of a polynomial-time algorithm. 
\end{proof}

\section{Cycle-separable signed graphs}

As an application of \Cref{hlavna}, we now consider irreflexive signed graphs in which the unicoloured edges form a spanning cycle $C$. Recall that we say that an irreflexive signed graph $\widehat{H}$ is {\em cycle-separable} if the unicoloured edges of $\widehat{H}$ form a spanning cycle in the underlying graph $H$. In other words, we have a spanning cycle $C$ whose edges are all unicoloured, and all the other edges of $\widehat{H}$ are bicoloured.

In contrast to the path-separable signed graphs, we cannot assume the edges of $C$ are all blue, and hence our signed graphs in this section are not semi-balanced, and we cannot use \Cref{mak}.
However, we note that the spanning cycle $C = v_0v_1 \ldots v_{n-1}$ is again unique.

\paragraph{The polynomial cases.} We first introduce three cycle-separable signed graphs (actually one is a family of graphs) for which the list homomorphism problem will turn out to be polynomial-time solvable. The signed graph $\widehat{H_0}$ is the $4$-cycle with all edges unicoloured blue. The signed graph $\widehat{H_1}$ consists of a blue path $b=t_0, t_1, t_2, t_3=w$, a blue-red-blue path $b, s_1, s_2, w$, together with a bicoloured edge $bw$. The signed graph $\widehat{H_\ell}$ consists of a blue path $b, s_1, s_2, w$, a blue path $b=t_0, t_1, t_2, \dots, t_{\ell}=w$ (with $\ell \geq 3$ odd), and all bicoloured edges $t_it_j$ with even $i$ and odd $j, j > i+1$. (Note that this includes the edge $bw$.) These three cycle-separable signed graphs $H_k, k=0, 1, \ell$, are illustrated in \Cref{fig:segment23}. Note that if the subscript $k$ is greater than $0$, then it is odd. Moreover, both $H_1$ and $H_3$ have $6$ vertices and differ only in the colours of the unicoloured edges forming the spanning cycle $C$: $H_1$ has the cycle $C$ unbalanced, and $H_3$ has the cycle $C$ balanced.  

\begin{figure}
\centering
\includegraphics[width=0.95\textwidth]{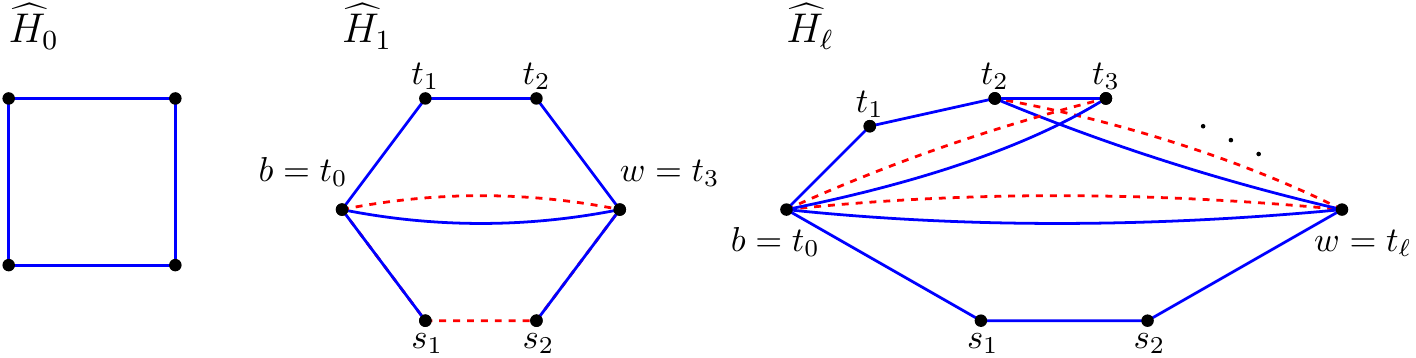}
\caption{The cycle-separable signed graphs $\widehat{H}_0$,  $\widehat{H}_1$, and $\widehat{H}_\ell$ with $\ell \geq 3$ odd.}
\label{fig:segment23}
\end{figure}

For the rest of this section, our goal is to prove the following theorem.

\begin{theorem} \label{thm:Hamilton}
Let $\widehat{H}$ be a cycle-separable signed graph.
Then \lhom{\widehat H} is polynomial-time solvable if $\widehat{H}$ is switching equivalent to $\widehat{H}_0$,  or to $\widehat{H}_1$, or to $\widehat{H}_\ell$ for some odd $\ell \geq 3$.
Otherwise, the problem is NP-complete.
\end{theorem}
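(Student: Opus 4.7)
The plan is to prove Theorem \ref{thm:Hamilton} in two parts: establishing polynomial-time solvability for the three families $\widehat{H}_0$, $\widehat{H}_1$, and $\widehat{H}_\ell$ (with $\ell \ge 3$ odd), and NP-completeness for every other cycle-separable signed graph.

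For the polynomial direction, $\widehat{H}_0$ is just $C_4$ with all edges blue and no bicoloured edges, so \Cref{thm:bordeaux} applies directly since $C_4$ is a bi-arc graph. For $\widehat{H}_\ell$ with $\ell \ge 3$ odd, the spanning cycle is balanced, so after a suitable switching we may assume every cycle edge is blue, making $\widehat{H}_\ell$ semi-balanced. I would then exhibit an explicit special min ordering, listing the $t_i$ in index order with $s_1, s_2$ placed so that every bicoloured neighbour of each vertex precedes every unicoloured neighbour; the chord structure $t_i t_j$ (even $i$, odd $j$, $j > i+1$) makes this work by essentially the same argument as the right-segmented case of \Cref{thm:path-poly}, so \Cref{mak} yields polynomial time. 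The trickiest polynomial case is $\widehat{H}_1$, which is \emph{not} semi-balanced, so \Cref{mak} does not apply. Here I would give a direct algorithm exploiting that $bw$ is the unique bicoloured edge: any bicoloured edge of the input must map into $\{b,w\}$, which after propagation forces large portions of the instance. Once these forcings are enumerated, the remaining sign-preserving list homomorphism problem reduces to a tractable instance (essentially list homomorphism to the blue path $b,t_1,t_2,w$ combined with an independent consistency check on the unbalanced side through $s_1,s_2$, which can be solved by a $2$-SAT style formulation).

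For NP-completeness, if the spanning cycle $C$ has odd length then $H$ is non-bipartite and \Cref{thm:dichotomy} suffices; if $C$ has even length $> 4$ with no bicoloured chord, the induced-cycle argument from the start of Section 2 (using \Cref{thm:bordeaux} and \Cref{thm:chain}) applies. In the remaining case $\widehat{H}$ has a bicoloured chord, and I would reduce to the path-separable classification: pick a carefully chosen unicoloured cycle edge $e$ and let $\widehat{H}' = \widehat{H} - e$. This $\widehat{H}'$ is path-separable, and I would show that if $\widehat{H}$ is not switching-equivalent to any of $\widehat{H}_0, \widehat{H}_1, \widehat{H}_\ell$, then $\widehat{H}'$ is not segmented. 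By \Cref{hlavna}, \lhom{\widehat{H}'} is then NP-complete, and a list-restricting gadget (forcing the two endpoints of $e$ onto specific vertices in any extension) lifts the hardness back to \lhom{\widehat{H}}.

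The main obstacle has three facets. First, because $\widehat{H}_1$ is not semi-balanced we cannot invoke \Cref{mak} and must design an algorithm by hand; getting the forcing argument airtight for all list instances is delicate. Second, the cyclic topology destroys the natural ``end of the path'' used in \Cref{thm:path-npc} to root the forward/backward source analysis, so the case split identifying exactly when $\widehat{H}'$ is segmented has to be redone with both sides of the deleted edge $e$ treated symmetrically; this is where the special shapes $\widehat{H}_\ell$ (a long ``top'' path with one fixed ``bottom'' of length $3$) naturally emerge as the only segmented-after-removal possibilities. Third, the lift from \lhom{\widehat{H}'} to \lhom{\widehat{H}} requires carefully constructed lists or small gadgets so that no homomorphism to $\widehat{H}$ exploits the edge $e$ that was removed.
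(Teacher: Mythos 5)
There is a genuine gap in your NP-hardness plan, and it sits exactly at the case the paper has to work hardest on. Your key claim is: if $\widehat{H}$ is not switching equivalent to $\widehat{H}_0$, $\widehat{H}_1$ or $\widehat{H}_\ell$, then some edge-deleted graph $\widehat{H}-e$ is not segmented, so \Cref{hlavna} gives hardness. This is false. Take $\widehat{H}$ with more than $6$ vertices whose bicoloured/unicoloured structure is exactly that of $\widehat{H}_\ell$ (the long $t$-path with all its chords, plus the path $b,s_1,s_2,w$) but whose spanning unicoloured cycle is \emph{unbalanced}. Such a graph is not switching equivalent to $\widehat{H}_\ell$ (switching preserves the balance of $C$), nor to $\widehat{H}_0$ or $\widehat{H}_1$. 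Yet being ``segmented'' depends only on where the bicoloured edges sit relative to the unicoloured spanning path, because a spanning path of unicoloured edges can always be switched to all blue; so for every choice of $e$, the graph $\widehat{H}-e$ has the same segmented/non-segmented status as $\widehat{H}_\ell-e$. In particular $\widehat{H}-s_1s_2$ is segmented (it is right-segmented: the $t$-path is a single right-leaning segment and $s_1,s_2$ carry no bicoloured edges), and no edge deletion can distinguish this unbalanced graph from the polynomial graph $\widehat{H}_\ell$. So your reduction produces no hardness precisely for these graphs, which are NP-complete. The paper handles this ``main case'' with a completely different tool: since the graph is not semi-balanced, neither chains nor \Cref{mak} apply, and it gives a direct reduction from a Boolean constraint satisfaction problem with the relation $(a'=b'=c'=d') \vee (a' \neq c')$, shown NP-complete via Schaefer's dichotomy, using list gadgets whose inner vertices map either entirely to the $t$-vertices or entirely to the $s$-vertices, with the unbalanced $s$-route versus the balanced $t$-route encoding the constraint through switchings. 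Some substitute for this sign-sensitive reduction is unavoidable in your scheme.

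Two smaller points. First, even where $\widehat{H}-e$ is genuinely non-segmented, hardness of \lhom{\widehat{H}-e} does not lift to \lhom{\widehat{H}} by restricting lists, since $\widehat{H}-e$ is not an induced subgraph and a homomorphism to $\widehat{H}$ may use the edge $e$; your ``list-restricting gadget'' would have to be constructed and verified, whereas the paper avoids the issue by deleting a degree-two \emph{vertex} $v_0$, so that $\widehat{H}-v_0$ is induced, and it uses this in the contrapositive direction (if \lhom{\widehat H} is not NP-complete then $\widehat{H}-v_0$ must be segmented, which drives the structural analysis). Second, your algorithm sketch for $\widehat{H}_1$ is thinner than what is needed: the paper runs arc consistency guided by the special min ordering, splits the instance into boundary vertices (forced to $b,w$) and regions of interior vertices, and then solves a system of linear equations modulo two over switching variables, including three-variable parity constraints of the form $x_i = y_j + z_k$; a plain $2$-SAT formulation does not obviously capture these, so this part of your plan also needs to be replaced by an affine (Gaussian-elimination) argument or equivalent.
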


\subsection{The NP-complete cases}

We prove the following theorem by contrapositive, assuming \lhom{\widehat H} is not NP-complete. We will provide a series of observations excluding more obvious cases which will leave us with graphs from \Cref{fig:segment23} and with one additional family of cycle-separable graphs. We will show an NP-completeness reduction for the latter.

\begin{theorem} \label{thm:cycle-npc}
Let $\widehat H$ be a cycle-separable signed graph.
If $\widehat H$ is not switching equivalent to any of the graphs from \Cref{fig:segment23}, then \lhom{\widehat H} is NP-complete.
\end{theorem}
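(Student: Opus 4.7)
The plan is to follow the same schematic as the proof of \Cref{thm:path-npc} while taking advantage of the cyclic symmetry and then identifying the surviving configurations with the three listed graphs. I would first record the preliminary restrictions under the contrapositive hypothesis: \Cref{thm:dichotomy} forces $H$ to be bipartite, so the length $n$ of $C$ is even; the induced-cycle analysis from the start of the path-separable section still forbids induced cycles of length greater than $4$; and \Cref{thm:chain,thm:invo} exclude chains, invertible pairs, alternating $4$-cycles, and $4$-cycle pairs in $\widehat H$. These observations transfer verbatim because their justifications never used that the unicoloured edges form a path rather than a cycle.

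The main technical tool will be an ``unfolding'' of $C$. For each unicoloured edge $e_i=v_iv_{i+1}$ the vertex sequence $v_{i+1}, v_{i+2}, \ldots, v_i$ along the remaining unicoloured edges of $C$ is a Hamiltonian unicoloured path. Although the resulting structure is not literally path-separable (since $e_i$ remains as an extra unicoloured edge between the endpoints of the path), every forbidden substructure used in the proof of \Cref{thm:path-npc} --- induced subgraphs of type (a) or (b) from family $\cal F$, $4$-cycle pairs, alternating $4$-cycles, and the explicit chains constructed inside blocks and between consecutive segments --- remains an obstruction in $\widehat H$. Hence the block/segment analysis of that proof applies to the unfolded structure at every~$e_i$, and the bicoloured edges of $\widehat H$ must be placed in a pattern that is ``segmented'' from every cyclic vantage point simultaneously.

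The small cases $n=4$ and $n=6$ can then be dispatched by enumerating the admissible bicoloured chord sets compatible with the restrictions above; this yields $\widehat H_0$ for $n=4$ and exactly $\widehat H_1$ and $\widehat H_3$ for $n=6$ (up to switching). For $n\geq 8$, the simultaneous segmentedness constraint is much stronger than segmentedness at a single cut. I would argue that if two arcs of $C$ separated by sufficiently long unicoloured paths both contain endpoints of bicoloured chords, then either a chain straddling one of the separating arcs or an induced cycle of length greater than $4$ is produced. Hence all bicoloured chord endpoints must lie in a single arc of $C$, and the complementary arc is a purely unicoloured path $b, s_1, s_2, w$ of length three. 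On the long arc the segmented analysis of Case~1 of the proof of \Cref{thm:path-npc} forces the complete fan of bicoloured edges that defines $\widehat H_\ell$. Finally, the signs on the short arc are constrained by the absence of further induced subgraphs from $\cal F$: a balanced $C$ yields $\widehat H_\ell$ with $\ell\geq 3$ odd, while the only unbalanced surviving possibility is the six-vertex graph $\widehat H_1$.

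The main obstacle will be the ``concentration'' step above --- rigorously ruling out every configuration in which bicoloured chords appear in two or more arcs of $C$. The path-separable obstructions are phrased in terms of a linear spine, so the chains and induced long-cycle arguments must be carefully adapted to patterns that wrap around the closing unicoloured edge. The adaptation is more delicate when $C$ is unbalanced, because one cannot globally switch $C$ to all blue and the family-$\cal F$ subgraph checks therefore have to be carried out over more sign patterns. I expect the bulk of the real work, and the part most sensitive to errors, to live in that concentration step and in the short-arc sign analysis that distinguishes $\widehat H_1$ from the balanced $\widehat H_\ell$.
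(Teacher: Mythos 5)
There is a genuine gap, and it is exactly where the bulk of the paper's proof lives. Your structural analysis (unfolding $C$ at a unicoloured edge and importing the block/segment machinery of \Cref{thm:path-npc}; the paper does essentially this by deleting a degree-two vertex $v_0$ and invoking \Cref{hlavna}) correctly narrows the candidates down to signed graphs whose underlying unicoloured/bicoloured pattern is that of $\widehat{H}_0$, $\widehat{H}_1$, or $\widehat{H}_\ell$. But your final step --- ``the signs on the short arc are constrained by the absence of further induced subgraphs from $\cal F$: a balanced $C$ yields $\widehat H_\ell$, while the only unbalanced surviving possibility is $\widehat H_1$'' --- cannot work. Chains, invertible pairs, alternating $4$-cycles, $4$-cycle pairs and the family $\cal F$ are all defined purely in terms of which edges are unicoloured, bicoloured, or absent; they are blind to the actual signs and hence to whether $C$ is balanced or unbalanced. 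The signed graph with $n>6$, the same chord pattern as $\widehat{H}_\ell$, but an \emph{unbalanced} spanning cycle therefore contains no chain and no member of $\cal F$ (its uni/bi structure is identical to the polynomial graph $\widehat{H}_\ell$, which even has a special min ordering), so no obstruction of the kind you propose can eliminate it. Since it is not semi-balanced, \Cref{mak} does not apply either, so its hardness cannot be deduced from anything in the path-separable toolkit.

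Yet this case is NP-complete, and proving that is the main content of the paper's argument: a reduction from a Boolean constraint satisfaction problem with constraints $(a'=b'=c'=d') \vee (a'\neq c')$ on quadruples, shown NP-complete via Schaefer's dichotomy (it is not 2-SAT, Horn, co-Horn, or linear). The reduction uses a gadget $Q(a',b',c',d')$ whose inner vertices must map entirely to the $t$-path or entirely to the $s$-path of $\widehat{H}$, and it exploits the unbalance of $C$ through the signs of the paths between the four fixed-image vertices: mapping to the $t$-side forces $a'$ and $c'$ to be switched differently, while mapping to the (red) $s$-side forces all four endpoints to be switched identically. None of this, nor any substitute hardness argument for the unbalanced $n>6$ family, appears in your proposal, so the proof as sketched would wrongly classify those graphs as polynomial cases. (Your small-case enumeration for $n=4,6$ and the concentration of chords into one arc are fine in spirit; the missing piece is the dedicated reduction for the unbalanced large case.)
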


\begin{proof}
Suppose $\widehat{H}$ is a cycle-separable signed graph for which \lhom{\widehat H} is not NP-complete. As $\widehat{H}$ is irreflexive, it must be bipartite.  If $\widehat{H}$ has no bicoloured edges, then it must be a balanced even cycle of length $4$, i.e., $\widehat{H}$ is switching equivalent to $\widehat{H_0}$.

Now suppose that $\widehat{H}$ has at least $6$ vertices and the spanning cycle $C$ with cyclically ordered vertices $v_0, v_1, \dots, v_{n-1}, v_0$. Without loss of generality all edges are blue with the possible exception of $v_{0}v_1$.  Consider $\widehat{H}-v_0$.  
This is switching equivalent to a segmented signed graph with spanning path $v_1, v_2, \dots, v_{n-1}$ (where $n$ is even), and thus it has the structure described above.

By symmetry, there is a right-leaning segment.  Let $v_i$ be the first vertex of the first right-leaning segment.  Then $v_{i+1}$ has degree $2$ in $\widehat{H}-v_0$ and degree $2$ in $\widehat{H}$ unless $v_0 v_{i+1}$ is a bicoloured edge.  If $v_0 v_{i+1}$ is an edge, then $i+1$ is odd and the forward source $v_i$ sends a bicoloured edge to each $v_o$ for $o$ odd with $i+3 \leq o \leq n-1$.  Consequently, $v_0 v_{i+1} v_i v_{n-1} v_0$ is an alternating $4$-cycle contrary to our assumption that  $\textsc{List-S-Hom}(\widehat{H})$ is not NP-complete. We conclude $v_{i+1}$ has degree 2 in $\widehat{H}$.

Rename the vertices of the underlying graph $H$ so that $v_0$ is a vertex of degree two. The signed graph $\widehat{H}-v_0$ is path-separable. We are assuming that the list homomorphism problem for $\widehat{H}-v_0$ is not NP-complete, so \Cref{hlavna} implies that $\widehat{H}-v_0$ is switching equivalent to a segmented signed graph with spanning path $P = v_1, v_2, \dots, v_{n-1}$.
In particular, we may switch so the spanning path $v_1, v_2, \dots, v_{n-1}$ of unicoloured edges is all blue, the edge $v_0v_{n-1}$ is blue, and the edge $v_0v_1$ may be red of blue (depending on the sign $C$). 

By symmetry, we may assume $v_2$ is adjacent to $v_{n-1}$ (recall $n$ is even), otherwise $\widehat{H}$ contains an induced cycle of length greater than four. Thus we have a $4$-cycle $v_0, v_1, v_2, v_{n-1}, v_0$. 

Assume first that $v_2, v_3, v_4, v_5, v_2$ is also a 4-cycle. Then we must have that $v_4v_{n-1}$ is also an edge, otherwise $\widehat{H}$ would have 4-cycle pair. If $v_1v_4$ is an edge, then we have an alternating $4$-cycle.  If $v_6 v_{n-1}$ is not an edge, then the path $v_6, v_5, v_4, v_{n-1}, v_0, v_1$ is a case $(b)$ of family ${\mathcal F}$ in \Cref{fig:forbgraphsirref}. Similarly, $v_1 v_6$ is not an edge and $v_2 v_7$ is.  By repeating this argument, we conclude that $v_8v_{n-1}, \ldots , v_{n-4}v_{n-1}$ and $v_2v_9, \ldots, v_2v_{n-3}$ must also be edges. Thus using our descriptions of the polynomial path-separable cases, we conclude that $\widehat{H}- v_0 - v_1 = v_2, v_3, \ldots, v_{n-1}$ is just one segment. If $v_{n-1}, v_{n-2}, v_{n-3}, v_{n-4}$ is a 4-cycle, the argument is similar.

If neither $v_2, v_3, v_4, v_5$ nor $v_{n-1}, v_{n-2}, v_{n-3}, v_{n-4}$ is a 4-cycle, then from \Cref{hlavna} we conclude that $\widehat{H}-v_0$ is left-right-segmented, with a left-right-leaning segment $S$ not at the end of $P$. 
This is easy to dismiss, because there would be a $P_5$ (case (b) of family $\mathcal{F}$) involving the segment $S$ and the vertex $v_0$. In conclusion $\widehat{H}- v_0 - v_1$ is just one segment.  We label the vertices of $\widehat{H}$ as in \Cref{fig:segment23}, namely, the segment $v_2, \dots, v_{n-1}$ is $b=t_0, t_1, \dots, t_{\ell}=w$ and the path $v_2, v_1, v_0, v_{n-1}$ is $b, s_1, s_2, w$.

If the cycle $C$ is balanced, then $\widehat{H}$ is switching equivalent to some $\widehat{H_\ell}$ with $\ell \geq 3$. If the cycle $C$ is unbalanced and $n=6$, then $\widehat{H}$ is switching equivalent to $\widehat{H_1}$. In fact, $\widehat{H}$ has only the edge $s_1s_2$ red. We will show later (in \Cref{thm:cycle-poly}) that both these cases are polynomial-time solvable.

\paragraph{The main case.} Therefore, assume that $\widehat{H}$ has $n > 6$ and its spanning cycle $C$ is unbalanced. Without loss of generality, assume the path $b, t_1, \dots, t_{\ell-1}, w$ is blue and the path $b, s_1, s_2, w$ is red.  The vertices of the segment $b,s_1,s_2,w$ are called the \emph{$s$-vertices} and the vertices of the segment $b,t_1,\ldots,t_{\ell-1},w$ are called the \emph{$t$-vertices}.  

We now prove that in this case \lhom{\widehat H} is NP-complete. Unfortunately, we cannot use a chain (the graph is not semi-balanced so \Cref{mak} does not apply). We will instead reduce from one of the NP-complete cases of Boolean satisfiability dichotomy theorem of Schaefer~\cite{schaefer1978complexity}. 

\paragraph{The problem.} An instance of the problem is a set of Boolean variables $V$ and a set of quadruples $R$ over these variables. The problem asks if there is an assignment of $0, 1$ to the variables so that for every quadruple $(a',b',c',d') \in R$, the Boolean expression $(a' = b' = c' = d') \vee (a' \neq c')$ is satisfied.

Schaeffer~\cite{schaefer1978complexity} proved that a Boolean constraint satisfaction problem is NP-complete except for the well known polynomial cases of 2-SAT, Horn clauses, co-Horn clauses, linear equations modulo two, or when the only satisfying assignments are the all true or the all false assignments. To see that our problem is not expressible as 2-SAT, consider the following three satisfying assignments for $(a',b',c',d')$:  $(1,1,1,1), (1,0,0,0), (0,0,1,0)$. 

It is well known, see e.g.~\cite{sudan} Lemma 4.9, that any problem expressible as 2-SAT has the property that the majority function on three satisfying assignments must also be a satisfying assignment. However, for our three assignments the majority function yields the assignment $(1,0,1,0)$ which is not satisfying. 
Similarly, our problem is not expressible as Horn clauses (respectively co-Horn clauses) because the minimum (respectively maximum) function on the two satisfying assignments $(0,1,1,0), (1,1,0,0)$ is not a satisfying assignment, cf.\ Lemma 4.8 in~\cite{sudan}. 
Finally, our problem is not expressible by linear equations modulo two because the sum modulo two of the three satisfying assignments $(1,1,1,1), (1,1,0,1), (0,1,1,1)$ results in the assignment $(0,1,0,1)$ which is not satisfying, cf.\ Lemma 4.10 in~\cite{sudan}. Thus our problem is one of the NP-complete cases. 

\paragraph{Gadgets.} Consider an instance $R$ (over $V$) of our satisfiability problem. We shall now construct a signed graph $\widehat{G}$ with lists such that $\widehat{G}$ admits a list homomorphism to $\widehat{H}$ if and only if there is a satisfying assignment for the set of quadruples $R$.

For each quadruple $(a',b',c',d')$, we shall construct a copy of the gadget $Q(a',b',c',d')$ (with lists) as in \Cref{fig:quad}. 
Observe that the images of $a', b', c'$ and $d'$ are all fixed.  The remaining vertices, which we call \emph{inner vertices}, must all map to the $t$-vertices or must all map to the $s$-vertices.

A variable, say $r$, can appear in multiple quadruples.  In this case, there will be a vertex corresponding to $r$ for each quadruple.  If $r$ appears multiple times in the first or second coordinate, then we add a vertex $x_r$ to $\widehat{G}$ and a blue edge from $x_r$ to each occurrence of $r$ (in the first two coordinates).  The vertex $x_r$ has the list $\{ t_1 \}$.  Similarly, if $r$ appears in the last two coordinates (corresponding to $c'$ and $d'$) of multiple quadruples, then a vertex $y_r$ with list $\{ t_{\ell-1} \}$ is added to $\widehat{G}$ together with blue edges joining $y_r$ to each occurrence of $r$ in the last two coordinates.  Finally if $r$ occurs in the first or second coordinate in one quadrangle, say as $r'$,  and in the third or fourth coordinate of another quadrangle, as $r''$, then a blue path $r', r_1, r_2, \dots, r_{\ell-1}, r''$ is added to $\widehat{G}$ with $L(r_i) = \{ t_i \}$ for each $i=1, 2, \dots, \ell-1$.  This path needs only to be added once for the variable $r$.  Observe at this point between any two occurrences of $r$ in $\widehat{G}$, there is a blue path whose image under any homomorphism to $\widehat{H}$ is uniquely determined by its lists.  Moreover, the image of each such path is a positive walk. Consequently, under any list homomorphism $\widehat{G} \to \widehat{H}$ either no occurrence of $r$ is switched or all occurrences of $r$ are switched.

\begin{figure}
\centering
\includegraphics[scale=1.1]{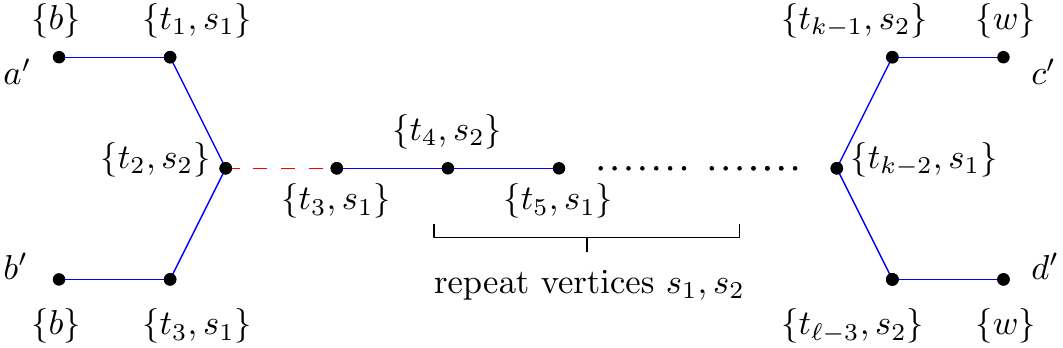}
\caption{A quadruple gadget $Q(a',b',c',d')$.}
\label{fig:quad}
\end{figure}

\paragraph{Proof of reduction.} We claim that there exists a satisfying assignment for $R$ if and only if there is a list homomorphism of $\widehat{G}$ to $\widehat{H}$.

Let $f\colon \widehat{G} \to \widehat{H}$ be a list homomorphism.  We define an assignment $\pi_f\colon V \to \{ 0, 1\}$ by setting the variable $r$ to $1$ if an occurrence of $r$ in $\widehat{G}$ is switched under the homomorphism $f$ and setting $r$ to $0$ otherwise. As observed above, under $f$, all occurrences of  $r$  are switched or no occurrence is switched. Thus the assignment $\pi_f$ is a well defined.

To complete the reduction we show $\pi_f$ is a satisfying truth assignment.  Consider a particular copy of the gadget $Q(a',b',c',d')$ and consider the quadruple $(\pi_f(a'), \pi_f(b'), \pi_f(c'), \pi_f(d'))$.  For vertices $u, v$ in $Q$, let $P(u,v)$ denote the path from $u$ to $v$ in the copy of $Q$.  Initially, $P(a',b')$ and $P(c',d')$ are both positive paths, while
$$P(a',c'), P(a',d'), P(b',c'), P(b',d')$$ are all negative.  Switching an end point of a path changes the sign of the path while switching an interior vertex of the path leaves its sign unchanged.
The crucial observation is that after we fix switchings at endpoints, the signs of the paths $P(a',c')$, $P(a',d')$, $P(b',c')$, $P(b',d')$ (let us call them \emph{main paths}) are invariant upon switching at some inner vertices of the gadget.

The last thing we need to argue is that certain switchings of the endpoints of the quadruple gadget are not possible and some of them are possible. We denote a particular switching as a quadruple $(s_{a'},s_{b'},s_{c'},s_{d'})$ with zeroes and ones with the meaning that one corresponds to not being switched and zero corresponds to being switched. 

The images of $a', b', c'$ and $d'$ under $f$ are uniquely determined by their lists.  The remaining vertices, which we call \emph{inner vertices}, must all map to the $t$-vertices or must all map to the $s$-vertices. That is, for a given quadruple gadget, its inner vertices must all choose either the first element described in its list or the second in every possible list homomorphism.  We consider the two cases.

\begin{itemize}
  \item \emph{The internal vertices map to the $t$-vertices.}  In this case $P(a',c')$ maps to $b, t_1, t_2, \dots, t_{\ell-1}, w$.  This path is positive in $\widehat{H}$ while $P(a',c')$ is negative in $Q$.  Thus exactly one of $a'$ or $c'$ must be switched under $f$.  That is, $\pi_f(a') \neq \pi_f(c')$ and $(\pi_f(a'), \pi_f(b'), \pi_f(c'), \pi_f(d'))$ is a satisfying truth assignment. The bicoloured edges $bt_3$ and $t_{\ell-3}w$ ensure $b'$ and $d'$ can be switched or not independently of $a', c'$ and each other.
  \item \emph{The internal vertices map to the $s$-vertices.} In this case all of $P(a',c')$, $P(a',d')$, $P(b',c'),$ and  $P(b',d')$ map to $b, s_1, s_2, s_1, \dots, s_2, w$.  As the four paths in $Q$ and the image (walks) in $\widehat{H}$ are all negative, it follows that either all of $\{ a', b', c', d' \}$ are switched or none is switched.  Thus $\pi_f(a') = \pi_f(b') = \pi_f(c') = \pi_f(d')$ and again we have a satisfying truth assignment for the quadruple. 
\end{itemize}

Conversely, assume we have a satisfying truth assignment, say $\pi\colon V \to \{ 0, 1 \}$.  For each variable $r$, switch all occurrences of $r$ if and only if $\pi(r) = 1$.  As observed above, all paths between occurrences of $r$ are (still) positive and admit a list homomorphism to $\widehat{H}$.  Consider a particular quadruple $Q(a',b',c',d')$. If $\pi(a') \neq \pi(c')$, then (after switching), the path $P(a',c')$ is positive; hence, we can switch internal vertices in $Q$ to make the path blue. We map it to $b, t_1, \dots, t_{\ell-1}, w$.  We map $P(a',b')$ to $b, t_1, t_2, t_3, b$.  Since the edge $bt_3$ is bicoloured, (after possibly switching at the neighbour of $b'$) this mapping is a list homomorphism.  A similar analysis works for the path $P(c',d')$.  Thus, $Q$ admits a list homomorphism of $\widehat{H}$. If $\pi(a')=\pi(b')=\pi(c')=\pi(d')$, then all or none of $a',b',c',d'$ are switched.  In this case, the internal vertices of $Q$ can be switched so all the edges are red.  Hence, $Q$ maps to the path $b, s_1, s_2, w$ which again is the desired list homomorphism.
\end{proof}

\subsection{The polynomial cases}

Next we show that the corresponding problem can be solved in polynomial time for all the remaining cycle-separable signed graphs, illustrated in \Cref{fig:segment23}.

\begin{theorem}\label{thm:cycle-poly}
\lhom{\widehat H} is polynomial-time solvable if $\widehat{H}$ is switching equivalent to $\widehat{H}_0$,  or to $\widehat{H}_1$, or to $\widehat{H}_\ell$ for some odd $\ell \geq 3$.
\end{theorem}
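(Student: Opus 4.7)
The plan is to handle the three cases $\widehat{H}_0$, $\widehat{H}_\ell$ for odd $\ell\ge 3$, and $\widehat{H}_1$ separately.

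For $\widehat{H}_0$, the signed $C_4$ with all edges blue is balanced and its underlying graph is a bi-arc graph, so \Cref{thm:bordeaux} applies immediately.

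For $\widehat{H}_\ell$ with odd $\ell\ge 3$, the spanning unicoloured cycle is all blue and hence balanced, so $\widehat{H}_\ell$ is semi-balanced and \Cref{mak} becomes available once a special min ordering is produced. I would order the white vertices in forward order $t_0<t_2<\cdots<t_{\ell-1}<s_2$ and the black vertices in reverse order $t_\ell<t_{\ell-2}<\cdots<t_1<s_1$. Since the bicoloured edges are precisely the pairs $t_it_j$ with $i$ even, $j$ odd and $j>i+1$, the bicoloured neighbours of each $t$-vertex are those with the largest indices among its neighbours on the opposite side, and therefore precede its unicoloured neighbours in the opposite ordering; the vertices $s_1,s_2$ have no bicoloured neighbours, so the special condition is vacuous for them. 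For the min property, if $x=t_i$, $x'=t_{i'}$ with $i<i'$ both even and $t_{i'}t_j$ is an edge for some odd $j$, then $j\in\{i'-1,i'+1\}$ or $j>i'+1$, and in each case $t_it_j$ is also an edge (unicoloured if $j=i+1$, bicoloured otherwise); the few cases involving $s_1,s_2$ are handled directly. \Cref{mak} then yields the polynomial-time algorithm.

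The main obstacle is $\widehat{H}_1$: its spanning cycle contains the single red edge $s_1s_2$ and is unbalanced, so $\widehat{H}_1$ is not semi-balanced and \Cref{mak} cannot be invoked. The plan is to give a direct algorithm exploiting the very small structure of $\widehat{H}_1$. Its underlying graph, two $4$-cycles sharing the edge $bw$, is a bi-arc graph and even admits a special min ordering placing $b$ before $s_2,t_2$ and $w$ before $s_1,t_1$. Given an input $(\widehat G,L)$, I would first reduce the lists by arc-consistency using this min ordering together with the observation that every endpoint of a bicoloured edge of $\widehat G$ must land in $\{b,w\}$ (the only bicoloured edge of $\widehat{H}_1$). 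After this reduction, the residual decision is the choice of switching at each vertex: an input unicoloured edge $uv$ of sign $\sigma$ whose image lies outside $\{b,w\}$ is forced into a specific unicoloured edge of $\widehat{H}_1$ of some fixed sign $\tau$, which imposes a parity constraint on the switching variables $s(u)$ and $s(v)$, while edges whose image lies in $\{b,w\}$ are unconstrained. The resulting XOR-SAT system can then be solved in polynomial time by Gaussian elimination over $\mathbb{F}_2$.
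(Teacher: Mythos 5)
Your treatment of $\widehat{H}_0$ and of $\widehat{H}_\ell$ ($\ell \ge 3$ odd) is correct and essentially identical to the paper's: \Cref{thm:bordeaux} for $\widehat{H}_0$, and for $\widehat{H}_\ell$ the same special min ordering ($t_0 < t_2 < \dots < s_2$ on one side, $t_\ell < t_{\ell-2} < \dots < s_1$ on the other) combined with \Cref{mak}. Your overall strategy for $\widehat{H}_1$ --- arc consistency followed by a system of parity (XOR) constraints on switching variables, solved over $\mathbb{Z}_2$ --- is also the paper's strategy, but your formulation of the linear system has a genuine gap.

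The problem is the claim that after the arc-consistency reduction, ``an input unicoloured edge $uv$ whose image lies outside $\{b,w\}$ is forced into a specific unicoloured edge of $\widehat{H}_1$ of some fixed sign $\tau$.'' This is false in general: a vertex of $\widehat{G}$ that cannot map to $b$ or $w$ will typically retain a two-element list such as $\{t_1,s_1\}$ or $\{t_2,s_2\}$ after arc consistency, so the image of an edge with both endpoints of this kind is \emph{not} determined --- it can go to $t_1t_2$ (blue) or to $s_1s_2$ (red), and these impose opposite parity constraints on $s(u)+s(v)$. Moreover, this choice is not free edge-by-edge: all interior vertices in one connected component of interior vertices must go to the $t$-side or all to the $s$-side, and the choice propagates to the boundary, since a walk from a vertex mapped to $b$ to a vertex mapped to $w$ through such a component must be positive in the $t$-case (image $b,t_1,t_2,w$) but negative in the $s$-case (image $b,s_1,s_2,w$). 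If you fix $\tau$ per edge as you propose (e.g.\ via the minimum-choice homomorphism), your XOR system can reject instances that do have a list homomorphism using the other side. The repair is what the paper does: introduce one extra $\mathbb{Z}_2$ variable $z_K$ per interior component (region) $K$ recording the $t$/$s$ choice, and write the parity constraints for edges and boundary-to-boundary walks through $K$ with $z_K$ appearing on the right-hand side (omitting it, or fixing it, when the lists in $K$ already force one side). The system stays linear and the rest of your argument then goes through; but as written, the step asserting a fixed sign $\tau$ would fail.
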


\begin{proof}
The proof is divided into three cases.

\medskip
\noindent\textit{Case 1:} If $\widehat{H}$ is switching equivalent to $\widehat{H}_0$, then \lhom{\widehat H} is polynomial-time solvable by \Cref{thm:bordeaux}.

\medskip
\noindent\textit{Case 2:} If $\widehat{H}$ is switching equivalent to $\widehat{H}_{\ell}$ with $\ell \geq 3, \ell$ odd, then \lhom{\widehat H} is polynomial-time solvable by \Cref{mak}. Specifically, we claim that $\widehat{H}_{\ell}$ has a special min ordering. To see this, remove the vertices $s_1$ and $s_2$, obtaining a path-segmented signed graph $\widehat{H'}$ consisting of just one segment. According to \Cref{sec:polynomial}, $\widehat{H'}$ has a special min ordering $<$ in which $b=t_0 < t_2 < t_4 < \dots$ and $w=t_{\ell} < t_{\ell-2} < t_{\ell-4} < \dots$. To obtain a special min ordering of $\widehat{H_{\ell}}$ we simply add the vertices $s_1, s_2$ at the end of $<$, i.e., we set $b=t_0 < t_2 < t_4 < \dots < s_2$ and $w=t_{\ell} < t_{\ell-2} < t_{\ell-4} < \dots < s_1$. 
The vertex $t_{2i}$, $i > 0$, has bicoloured edges to $t_{\ell}, t_{\ell-2}, \dots, t_{2i+3}$ and unicoloured edges to $t_{2i+1}, t_{2i-1}$.   Similarly for $t_{2i-1}$. Further, since $b$ and $w$ are adjacent with all vertices of the opposite colour, this is a special min ordering of $\widehat{H}_{\ell}$.

\medskip
\noindent\textit{Case 3:} It remains to handle the final case when $\widehat{H}$ is switching equivalent to $\widehat{H}_1$ In this case $\widehat H$ is not semi-balanced and hence a different technique is needed. Note however that $\widehat{H}_1$ does have a special min ordering, identical to that for $\widehat{H}_3$ --- $\widehat{H}_1$ and $\widehat{H_3}$ only differ in the colour of some unicoloured edges, which is irrelevant in the definition of special min ordering. The following technique, transforming the problem to solving a system of linear equations, is inspired by our proofs in the case of signed trees \cite{dm}, where more details can be found.

\paragraph{Preprocessing.} Let $\widehat{G}$ together with lists $L$ be an instance of \lhom{\widehat{H_1}}.  We may assume $G$ is connected and bipartite. We will call the vertices of parts of bipartition in  $\widehat{G}$ black and white as well. First, we try mapping the black vertices of $\widehat{G}$ to the black vertices of $\widehat{H_1}$. If that fails, we try mapping the white vertices of $\widehat{G}$ to the black vertices of $\widehat{H_1}$.  In the former case we remove all white (respectively black) vertices from the lists of the black (respectively white) vertices in $\widehat{G}$.  The latter case is analogous.  Thus we may assume we are mapping black vertices to black vertices and white vertices to white vertices.

Next we apply to the underlying graphs of $\widehat{G}, \widehat{H_1}$ the arc consistency procedure from~\cite{feder1998list} (see Algorithm 4 in \cite{hn1}); we also apply the same procedure to the graphs spanned by the bicoloured edges. If any list becomes empty then there is no list homomorphism of the underlying graphs and hence no list homomorphism of $\widehat{G}$ to $\widehat{H_1}$. Otherwise choosing the minimum of each list (with respect to the special min ordering) defines a list homomorphism $f$ of the underlying graphs, in which moreover bicoloured edges are taken to bicoloured edges. 

As a result, $f$ maps any bicoloured edge of $\widehat{G}$ to the edge $bw$ in $\widehat{H_1}$, Moreover, all vertices $v$ of $\widehat{G}$ which have $b$ or $w$ in their post-consistency lists (black vertices $v$ with $b \in L(v)$ and white vertices $v$ with $w \in L(v)$) must have $f(v)=b$ or $f(v)=w$ because they $b, w$ are the smallest black and white vertices respectively, in the special min ordering of $\widehat{H_1}$. We call all vertices with $f(v) = b$ or $f(v)=w$ {\em boundary vertices}, and the remaining vertices {\em interior vertices}. Thus interior vertices cannot map to $b, w$ by any homomorphism, as $b, w$ are not in their post-consistency lists.

The interior vertices of $\widehat{G}$ form a union of components. Consider such a component $K$. The subgraph of $\widehat{G}$ induced by $K$ is called a \emph{region}. For each region, either all its vertices map to $s_1, s_2$ or all map to $t_1, t_2$. 

Thus all the edges of the regions and all the edges joining interior vertices to boundary vertices are unicoloured.

\paragraph{Handling regions.} Let $K$ be a region. We will now decide whether $K$ will be mapped to $s_1, s_2$ or to $t_1, t_2$, and determine an appropriate switching of the boundary vertices to make this a homomorphism.

Under any list homomorphism to $\widehat{H_1}$ the image of $K$ is a single edge. In particular, $K$ must be balanced.  Hence we can switch vertices of $K$ so that the edges are all blue, and then identify the black vertices and identify the white vertices so that $K$ is now a single blue edge on vertices $k_1, k_2$. 
The region has boundary vertices consisting of $u_1, u_2, \ldots$ mapping to $b$ (adjacent to $k_1$) and $v_1, v_2, \ldots$ (adjacent to $k_2$) mapping to $w$. We need to switch the boundary vertices so that the  subgraph induced by region and its boundary vertices is a balanced subgraph that maps to $b, t_1, t_2, w$ or an anti-balanced subgraph that maps to $b, s_1, s_2, w$.

Note that under any suitable homomorphism, the input graph $\widehat{G}$ must be switched so that each walk from $v_i$ to $v_j$ whose internal vertices belong to $K$ is positive, and similarly for $u_i$ to $u_j$. To determine the switching we solve (in polynomial time) a system of linear equations modulo two. For each boundary vertex $v_i$, we will introduce a variable $x_i$ with value $1$ intended to mean the vertex is switched and $0$ meaning it is not switched. Similarly we introduce a variable $y_i$ for each vertex $u_i$.

If there is a positive walk from $v_i$ to $v_j$, we include the equation $x_i = x_j$, for all $i, j$. Similarly, if there is a negative walk, we include $x_i \neq x_j$. We proceed similarly for $u_1, u_2, \ldots$ and the variables $y_i$.

\paragraph{Linear equations.} Finally, we introduce linear equations to determine the signs of walks from $v_i$ to $u_j$. If $K$ can only map to $t$-vertices, then the walks between $v_i$ and $u_j$ must be positive.  We code the needed switching with $x_i = y_j$ if there is a positive walk from $v_i$ to $u_j$ in $\widehat{G}$ and $x_i \neq y_j$ if there is a negative walk.  Similarly, if $K$ can only map to $s$-vertices, we must switch to make all walks between $v_i$ and $u_j$ negative.  If $K$ has a choice, then we can use a variable $z_k$ that is $0$ if $K$ maps to $t$-vertices and $1$ if $K$ maps to $s$-vertices. For this situation, we include the equation $x_i = y_j + z_k$ if there is a positive walk from $v_i$ to $u_j$ and $x_i = y_j + z_k+1$ if there is a negative walk.  After repeating this construction for each region, we have our desired system of linear equations.

It is easy to see that if the system of linear equations has no solution, then there is no list homomorphism to $\widehat{H_1}$. Otherwise, a solution identifies a switching for all boundary vertices which allows each region to map to $\widehat{H_1}$.

Thus we have transformed the final case to a polynomial problem of a system of linear equations, completing the proof of Case 3.
\end{proof}

\section{Conclusion}

It seems difficult to give a full combinatorial classification of the complexity of list homomorphism problems for general signed graphs. For irreflexive signed graphs, which are in a sense the core of the problem, there is a conjectured classification in~\cite{ks}. Here we have obtained a full dichotomy classification in the special case of separable irreflexive signed graphs. The classification confirms the dichotomy conjecture of~\cite{ks} for this case, and also confirms that the only polynomial cases enjoy a special min ordering and the only NP-complete cases have chains or invertible pairs, as also conjectured in~\cite{ks}.

\section{Acknowledgements}

The first author received funding from the European Union's Horizon 2020 project H2020-MSCA-RISE-2018: Research and Innovation Staff Exchange and a partial support by the ANR (Agence nationale de la recherche) project GRALMECO (ANR-21-CE48-0004). The second author was supported by his NSERC Canada Discovery Grant. The fourth and fifth author were also partially supported by the fourth author's NSERC Canada Discovery Grant. The first and the fifth author were also supported by the Charles University Grant Agency project 370122. 

\bibliographystyle{plainurl}
\bibliography{bibliography}

\end{document}